%% file: arxiv.tex
\documentclass[sigconf, noacm]{acmart}
\AtBeginDocument{%
  }

\usepackage{algorithm}
\usepackage[noend]{algpseudocode}
\usepackage{subcaption}
\usepackage{balance} 
\usepackage{thmtools}
\usepackage{thm-restate}

\setcopyright{none}             
\renewcommand\footnotetextcopyrightpermission[1]{} 
\usepackage{fancyhdr}

\begin{document}

\title{Submodular Maximization under Supermodular Constraint: Greedy Guarantees}
\author{Ajitesh Srivastava}
\authornote{Part of the work was done while affiliated with the University of Southern California.}
\email{aj.srivastava@northeastern.edu}
\affiliation{%
  \institution{Northeastern University}
  \city{Charlotte}
  \state{North Carolina}
  \country{USA}
}

\author{Shanghua Teng}
\email{shanghua@usc.edu}
\affiliation{%
  \institution{University of Southern California}
  \city{Los Angeles}
  \state{California}
  \country{USA}}



\begin{abstract}

Motivated by a wide range of applications in data mining and machine learning, we consider the problem of maximizing a submodular function subject to supermodular cost constraints. In contrast to the well-understood setting of cardinality and matroid constraints, where greedy algorithms admit strong guarantees, the supermodular constraint regime remains poorly understood -- guarantees for greedy methods and other efficient algorithmic paradigms are largely open. We study this family of fundamental optimization problems under an upper-bound constraint on a supermodular cost function with curvature parameter $\gamma$.  Our notion of supermodular curvature is less restrictive than prior definitions, substantially expanding the class of admissible cost functions.
We show that our greedy algorithm, which iteratively includes elements maximizing the ratio of the objective and constraint functions, achieves a $\left(1 - e^{-(1-\gamma)}\right)$-approximation before stopping. We prove that this approximation is indeed tight for this algorithm. Further, if the objective function has a submodular curvature $c$, then we show that the bound further improves to $\left(1 - (1- (1-c)(1-\gamma))^{1/(1-c)}\right)$, which can be further improved by continuing to violate the constraint. Finally, we show that the Greedy-Ratio-Marginal in conjunction with binary search leads to a bicriteria approximation for the dual problem -- minimizing a supermodular function under a lower bound constraint on a submodular function. We conduct a number of experiments on a simulation of LLM agents debating over multiple rounds -- the task is to select a subset of agents to maximize correctly answered questions. Our algorithm outperforms all other greedy heuristics, and on smaller problems, it achieves the same performance as the optimal set found by exhaustive search.

\end{abstract}

\begin{CCSXML}
<ccs2012>
   <concept>
       <concept_id>10003752.10010070</concept_id>
       <concept_desc>Theory of computation~Theory and algorithms for application domains</concept_desc>
       <concept_significance>500</concept_significance>
       </concept>
   <concept>
       <concept_id>10002950.10003624</concept_id>
       <concept_desc>Mathematics of computing~Discrete mathematics</concept_desc>
       <concept_significance>300</concept_significance>
       </concept>
 </ccs2012>
\end{CCSXML}

\ccsdesc[500]{Theory of computation~Theory and algorithms for application domains}
\ccsdesc[300]{Mathematics of computing~Discrete mathematics}

\keywords{Greedy algorithm, Supermodular functions, Submodular functions, Curvature, Bicriteria Approximation}
\begin{teaserfigure}
  \includegraphics[width=\textwidth]{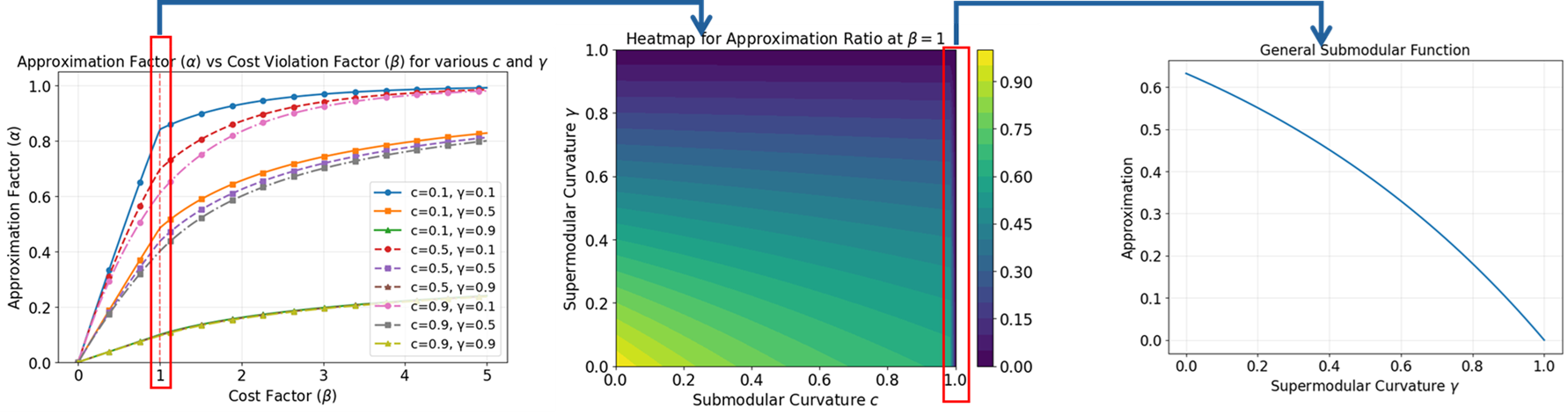}
  \caption{Summary of our results for submodular maximization with supermodular constraint: We find the approximation factor as a function of the supermodular curvature (right) when Greedy algorithm exceeds the constraint for the first time. We obtain better approximation when submodular objective has a known curvature (middle). We further find how approximation factor improves if we continue to overflow the constraint by a factor $\beta$ (left).}
  \label{fig:teaser}
\end{teaserfigure}


\pagestyle{plain}
\fancyhf{} 
\cfoot{\thepage} 

\maketitle

\section{Introduction}

The greedy algorithm is one of the most intuitive and widely used algorithmic paradigms. Owing to its simplicity and scalability, it is ubiquitous in practice, ranging from machine learning applications such as feature selection~\cite{chandrashekar2014survey} to real-world resource management problems such as vaccine allocation~\cite{islam2021evaluation}.
Despite its simplicity, greedy algorithms achieve provably strong approximation guarantees for a broad class of combinatorial optimization problems.

A particularly notable application is submodular maximization under cardinality, linear, or matroid constraints, with prominent examples including influence maximization~\cite{kempe2003maximizing} and recommender systems. In these classical settings, greedy selection yields constant-factor approximations for otherwise intractable optimization problems and is known to be optimal in the value-oracle model.

However, a growing class of submodular maximization problems arising in data mining and machine learning—including feature or rule selection for explanatory coverage, as well as agent selection in LLM-based debating frameworks with interdependent costs—involves supermodular cost constraints (more discussion see Section \ref{subsection:SOSCinML}), which fundamentally challenge existing approximation theory. Such constraints naturally arise from higher-order interaction, interference, or coordination effects among selected elements, leading to superlinear cost growth.


\subsection{Algorithmic Characterizations}

Submodular maximization under a supermodular constraint (SMSC) concerns the following optimization problem.

\begin{eqnarray}
    \max_{S\subseteq V} \; f(S) \quad\text{subject to}\quad g(S) \le \theta
\end{eqnarray}
where 
$f:2^V\to\mathbb{R}_{\ge0}$ and $g:2^V\to\mathbb{R}_{\ge0}$ are, respectively, 
\textit{normalized, monotone, submodular} and \textit{normalized, monotone, supermodular} functions over a finite ground set $V$, and $\theta>0$ is a budget.


For monotone submodular maximization subject to a non-linear (supermodular) knapsack constraint, no polynomial-time constant-factor approximation is possible in the black-box value-oracle model without additional structural assumptions on the cost function. To further highlight the inherent difficulty of this optimization problem, we observe the following hardness result via a simple reduction from the {\sc Maximum Independent Set} problem~\cite{haastad1999clique}.

\begin{proposition}[Approximation Intractability]
    Even maximizing a modular function subject to a non-linear supermodular knapsack constraint cannot be approximated within any constant factor in polynomial time, unless NP = P.
    This result also implies a lower bound of superpolynomially many queries in the black-box value-oracle model.
\end{proposition}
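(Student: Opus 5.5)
We reduce from {\sc Maximum Independent Set}, which by H\aa stad~\cite{haastad1999clique} cannot be approximated within $n^{1-\epsilon}$, and in particular within any constant factor, unless NP $=$ P. Given a graph $G=(V,E)$ with $|V|=n$, take the ground set to be $V$ and the objective to be the modular function $f(S)=|S|$, which is normalized, monotone and submodular. For the constraint, let $g(S)=|E(S)|$ be the number of edges induced by $S$, with budget $\theta=0$. If $\theta>0$ is required, we set $\theta=1/2$; since $g$ is integer-valued, $g(S)\le\theta$ still holds exactly when $S$ is independent.

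\textbf{Checking $g$.} It is normalized, since $g(\emptyset)=0$, and monotone, since adding vertices cannot remove induced edges. For supermodularity, note that the marginal $g(S\cup\{v\})-g(S)$ equals the number of neighbors of $v$ in $S$. This count is nondecreasing in $S$, so $g$ is supermodular. Equivalently, $g(S)=\sum_{\{u,w\}\in E}\mathbf{1}[u\in S]\mathbf{1}[w\in S]$ is a sum of products of indicators, and each such product is supermodular.

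\textbf{Transferring the hardness.} The feasible sets are exactly the independent sets of $G$, and the objective value of a feasible set is its size. Hence any $\alpha$-approximation for this instance of SMSC is an $\alpha$-approximation for {\sc Maximum Independent Set}. The reduction runs in polynomial time, and $f$ and $g$ have succinct polynomial-time evaluable descriptions. A polynomial-time constant-factor algorithm would therefore contradict H\aa stad's result unless NP $=$ P. The objective here is even unweighted cardinality, so the hardness comes entirely from the constraint.

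\textbf{Oracle lower bound.} This claim needs a separate, unconditional argument, since NP-hardness alone gives no query bound. I would use a hidden-structure construction. Let $g(S)=0$ whenever $|S|\le k$ with $k=\sqrt n$, except on a single hidden set $T$ of size $m$, and let $g$ be large on all other sets, with $\theta=0$. The difficulty is keeping $g$ monotone and supermodular while hiding $T$. A simpler route is to take $g(S)=\binom{|S\setminus T^*|}{2}$-type penalties, i.e.\ edges of a complete graph minus a clique on a hidden set $T^*$ of size $m$, say $m=n^{1/3}$ or $m=\sqrt n$. This is again an induced-edge count, so it is supermodular. An algorithm must then find a large independent set, i.e.\ locate $T^*$. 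Any query of a set of size at least $2$ that is not inside $T^*$ returns a positive value, and a random hidden $T^*$ contains a given queried set of size $s$ only with probability about $(m/n)^s$. The algorithm can instead probe pairs, since $g$ returns $0$ exactly on pairs inside $T^*$, and each pair lies in $T^*$ with probability roughly $(m/n)^2$. Choosing $m=O(\log n)$ makes the expected number of queries needed to find even two elements of $T^*$ superpolynomial in relative terms, while the optimum value $m$ exceeds any constant times the trivial value $1$.

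The main obstacle is calibrating this step so that the gap ratio $m$ is unbounded while the success probability after polynomially many queries stays $o(1)$. I expect this to require $m$ slightly superconstant and a Yao-style averaging argument over a random $T^*$.
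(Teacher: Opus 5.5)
Your NP-hardness argument is the paper's own reduction: $f(S)=|S|$, $g$ the induced-edge count, and a budget that admits exactly the independent sets. Your version is cleaner. The paper's $g(S)=\theta+|E_S|$ is not normalized, and its constraint is written as $g(S)\ge\theta$; your $g(S)=|E(S)|$ with $\theta=1/2$ fits the problem as posed. A minor citation point: H\aa stad's $n^{1-\epsilon}$ bound assumes NP $\neq$ ZPP. Under P $\neq$ NP, his $n^{1/2-\epsilon}$ bound (or the PCP theorem) already rules out constant factors, which is all you use.

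\textbf{The gap is in the oracle lower bound.} The paper gives no argument there beyond saying it is implied, and you are right that it is not. In the reduction instance itself, the $\binom{n}{2}$ pair queries reveal the whole graph. Your hidden-clique instance fails for the same reason. There $g(S)=\binom{|S|}{2}-\binom{|S\cap T^*|}{2}$, so every query leaks $|S\cap T^*|$. Querying $g(V)$ reveals $m$, and $g(V\setminus\{v\})$ reveals whether $v\in T^*$, so $n+1$ queries locate the optimum. More generally, any induced-edge count satisfies $g(S)=\sum_{\{u,w\}\subseteq S}g(\{u,w\})$, so it is learned with $O(n^2)$ queries. Hence no graph-based $g$, no choice of $m$, and no Yao averaging can give a superpolynomial bound. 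Your own pair-probing estimate gives about $n^2/m^2$ expected queries, which is polynomial.

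\textbf{How to repair it.} You need a $g$ whose answers carry information only on large sets lying inside the hidden set. Your abandoned first idea does this once the penalty grows with $|S|$. For any down-closed family $\mathcal F$, the function $g(S)=2^{|S|}\cdot\mathbf 1[S\notin\mathcal F]$ is normalized, monotone and supermodular:
\begin{itemize}
\item If $A,B\notin\mathcal F$ but $A\cap B\in\mathcal F$, then $A$ and $B$ are incomparable. So $|A\cup B|\ge\max(|A|,|B|)+1$, giving $2^{|A\cup B|}\ge 2^{|A|}+2^{|B|}$.
\item If $A\cap B\notin\mathcal F$, use convexity of $x\mapsto 2^x$ together with $|A\cup B|+|A\cap B|=|A|+|B|$.
\item The remaining cases follow from down-closedness.
\end{itemize}
A constant penalty fails exactly in the incomparable case, which is likely the obstacle you hit. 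With $\theta=1/2$, the value oracle is then just a membership oracle for $\mathcal F$.

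Now take $f(S)=|S|$ and $\mathcal F_T=\{S:|S|\le k\}\cup\{S:S\subseteq T\}$, where $T$ is a uniformly random $m$-subset, $m=\sqrt n$ and $k=\lceil\log_2 n\rceil$. A deterministic algorithm receives the same answers as on $\mathcal F_0=\{S:|S|\le k\}$ unless some queried set of size greater than $k$ lies inside $T$. That happens with probability at most $(m/n)^{k+1}=n^{-\Omega(\log n)}$ per query. So after polynomially many queries, with probability $1-o(1)$, it outputs a set that is either infeasible or of size at most $k$, while $\mathrm{OPT}=m$. Yao's principle extends this to randomized algorithms.
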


The reduction can be performed by considering the size of the desired independent set $S$ as the (sub)modular objective $f(S)$ and setting the constraint as some constant ($\theta$) plus the number of edges induced by the set $|E_S|$. Note that the constraint is supermodular as $|E_S|$ is supermodular~\cite{chekuri2022densest}.
Then the solution that maximizes $f(S)$ such that $g(S) \geq \theta$ must pick a set with $|E_S| = 0$, and thus $S$ would be the largest independent set. Therefore, any hope for a polynomial-time approximation of this fundamental optimization problem necessitates additional structural assumptions on the supermodular constraint function.

Motivated by several emerging applications in machine learning and data mining, we consider the problem of maximizing a submodular function $f$subject to an upper-bound constraint on a supermodular cost function $g$ with curvature 
$\gamma \in (0,1)$.

We define the curvature $\gamma$ of a supermodular function $g$ as:

\begin{eqnarray}\label{Eqn:SupermodularCurvature}
    \gamma = 1 - \min_{g(T)\geq g(S)} \frac{g(T)}{g(T \mid S)}.
\end{eqnarray}
Our notion of supermodular curvature is strictly less restrictive than previously studied notions of supermodular curvature~\cite{bai2018greed},
where the minimum is taken over $g(v)/g(v \mid S)$ over all elements $v$ and sets $S$ (see Section~\ref{sec:sub-sup} for details).
This substantially expands the class of admissible cost functions to include, for example, quadratic costs arising from interference or coordination among selected elements.
Intuitively, smaller curvature corresponds to behavior closer to linear: when $\gamma = 0$, the function $g$ is modular (linear), whereas as $\gamma$ approaches $1$, the ``synergy" or ``interference" effects in the cost function become increasingly pronounced. See more discussions of these potential applications in Section \ref{subsection:SOSCinML}.

We analyze a greedy algorithm that iteratively selects the element maximizing the ratio between marginal objective gain and marginal constraint cost, terminating at the first violation of the upper-bound constraint.
We prove:

\begin{restatable}[Greedy Overflow Characterization]{theorem}{thmBoundary}\label{thmBoundary}
    If our greedy algorithm at step $k-1$ is within the constraint and at $k$ overflows the constraint by a factor of $\beta > 1$, i.e., $\beta = g(S_k)/\theta$, then at least one of the following is true.
   
    \noindent (i) $f(S_{k-1}) \geq (1 - e^{-(1-\gamma)}) f(S^*)$, or
    (ii) $f(S_k) \geq \left(1-e^{-\beta(1-\gamma)}\right) f(S^*)$,

\noindent    where $S^*$ is the optimal set.
\end{restatable}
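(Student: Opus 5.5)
We prove the statement by the standard continuous-greedy style recursion for ratio-greedy algorithms, using $g$ as the ``time'' variable in place of cardinality. Let $S_0=\emptyset, S_1,\dots,S_k$ be the greedy sets, with $v_i$ the element added at step $i$. We write $\Delta_i = f(S_i)-f(S_{i-1})$ and $c_i = g(S_i)-g(S_{i-1})$, and set $\mathrm{OPT}=f(S^*)$.

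\textbf{Key inequality.} The first step is to show that for every $i\le k$,
\begin{equation*}
\mathrm{OPT}-f(S_{i-1}) \le \frac{\theta}{1-\gamma}\cdot\frac{\Delta_i}{c_i}.
\end{equation*}
By monotonicity and submodularity, $\mathrm{OPT}-f(S_{i-1})\le f(S^*\mid S_{i-1})\le \sum_{v\in S^*\setminus S_{i-1}} f(v\mid S_{i-1})$. The greedy choice gives $f(v\mid S_{i-1})\le \frac{\Delta_i}{c_i}\, g(v\mid S_{i-1})$ for each such $v$. Hence it suffices to bound $\sum_{v\in S^*\setminus S_{i-1}} g(v\mid S_{i-1})$.

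Write $T=S^*\setminus S_{i-1}$. Supermodularity says marginals increase as the base grows. Adding the elements of $T$ one at a time on top of $S_{i-1}$ therefore gives $\sum_{v\in T} g(v\mid S_{i-1})\le g(T\mid S_{i-1})$. Now apply the curvature definition (Eqn.~\ref{Eqn:SupermodularCurvature}) with this $T$ and $S=S_{i-1}$. This is the delicate point: the definition only applies when $g(T)\ge g(S_{i-1})$. In that case it yields $g(T\mid S_{i-1})\le g(T)/(1-\gamma)\le \theta/(1-\gamma)$, since $g(T)\le g(S^*)\le\theta$ by monotonicity.

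In the complementary case $g(T)<g(S_{i-1})$ (only possible when $i-1\le k-1$, where $g(S_{i-1})\le\theta$), I would try to argue directly. One route is to swap roles: bound $g(T\mid S_{i-1})$ via $g(T\cup S_{i-1}) - g(S_{i-1})$ and use curvature with the roles reversed. Another is to note that the theorem only needs the inequality integrated up to cost $\theta(1-\gamma)$-scaled time. I expect this case analysis around the restricted minimum in the definition of $\gamma$ to be the main obstacle. My first attempt would be to show the bound $\theta/(1-\gamma)$ still holds because $g(T)<g(S_{i-1})\le\theta$ keeps everything within budget.

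\textbf{Recursion.} Once the key inequality is established, set $x_i=\mathrm{OPT}-f(S_i)$. The inequality gives
\begin{equation*}
x_i\le \Bigl(1-\tfrac{(1-\gamma)c_i}{\theta}\Bigr)x_{i-1}\le e^{-(1-\gamma)c_i/\theta}x_{i-1}.
\end{equation*}
Telescoping, and using normalization $g(\emptyset)=0$ so that $\sum_{i\le j} c_i=g(S_j)$, we get
\begin{equation*}
x_j\le e^{-(1-\gamma)g(S_j)/\theta}\,\mathrm{OPT}.
\end{equation*}
Applying this at $j=k$, where $g(S_k)=\beta\theta$, gives (ii) directly, provided the key inequality holds at all steps up to $k$. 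If the argument for some step forces the curvature case to fail, I would instead stop the recursion at $k-1$.

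\textbf{Handling the dichotomy.} If $f(S_{k-1})\ge(1-e^{-(1-\gamma)})\mathrm{OPT}$, we are in case (i) and done. Otherwise $x_{k-1}$ is large, so $f(S_{k-1})<\mathrm{OPT}$ and the greedy marginal ratios are forced to be positive and informative. In that case the chain of inequalities above goes through up to step $k$, yielding (ii). This disjunction is also where I would absorb the problematic curvature case: when that case occurs, I would try to show it implies (i) directly.
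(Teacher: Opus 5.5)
\textbf{Verdict: genuine gap, but a fixable one.} The gap is the one you flagged, and the proposal does not close it. Your key inequality is proved only when $g(S^*\setminus S_{i-1})\ge g(S_{i-1})$. For the complementary case you list things to try but give no argument, and your final paragraph does not supply one: whether (i) fails has no bearing on whether that curvature precondition holds.

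Your fallback, showing that the bad case ``implies (i)'', also cannot work as you set it up. At the first step where the precondition fails, your $\theta$-normalized recursion certifies only $f(S_{i-1})\ge(1-e^{-(1-\gamma)g(S_{i-1})/\theta})\,\mathrm{OPT}$. Since $g(S_{i-1})$ need only exceed $g(S^*\setminus S_{i-1})$, this can be far below $1-e^{-(1-\gamma)}$.

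The paper avoids this in two ways. First, it applies the curvature definition with $T=S^*$: this gives the same marginal $g(S^*\mid S_{i-1})$ but needs only the weaker precondition $g(S^*)\ge g(S_{i-1})$. Second, it normalizes by $g(S^*)$ instead of $\theta$, which yields $f(S_j)\ge(1-e^{-(1-\gamma)g(S_j)/g(S^*)})\,\mathrm{OPT}$ as long as all earlier greedy sets cost at most $g(S^*)$. The dichotomy then falls out:
\begin{itemize}
\item If some $S_{k'}$ with $k'<k$ reaches cost $g(S^*)$, the exponent is already at least $1-\gamma$, and monotonicity gives (i).
\item Otherwise the bound extends to step $k$, and $g(S_k)/g(S^*)\ge g(S_k)/\theta=\beta$ gives (ii).
\end{itemize}

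That said, your first idea, swapping the roles, does close the gap, and it gives more than the theorem states. Suppose $g(T)<g(S_{i-1})$ with $T=S^*\setminus S_{i-1}$. Apply the curvature definition to the (disjoint) pair $(S_{i-1},T)$ to get $g(S_{i-1}\mid T)\le g(S_{i-1})/(1-\gamma)$. Then
\[
g(T\mid S_{i-1}) = g(S_{i-1}\mid T)+g(T)-g(S_{i-1}) \le \frac{\gamma}{1-\gamma}\,g(S_{i-1})+g(T) \le \frac{\gamma}{1-\gamma}\,\theta+\theta = \frac{\theta}{1-\gamma},
\]
using $g(S_{i-1})\le g(S_{k-1})\le\theta$ for $i\le k$ and $g(T)\le g(S^*)\le\theta$. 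This is the same manipulation the paper uses after overflow in the proof of Theorem~\ref{thm:general}. Before overflow, the fact that $g(S_{i-1})\le\theta$ makes it collapse to $\theta/(1-\gamma)$.

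Hence your key inequality holds at every step $i\le k$, and your telescoping gives (ii) unconditionally; the disjunction is not needed. One small repair in the recursion: the step $(1-a)x_{i-1}\le e^{-a}x_{i-1}$, with $a=(1-\gamma)c_i/\theta$, needs $x_{i-1}\ge 0$. If $x_{i-1}<0$ you already exceed $\mathrm{OPT}$, and monotonicity finishes.

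With this case written out, your $\theta$-normalized route is a valid and slightly stronger alternative to the paper's $g(S^*)$-normalized dichotomy. As submitted, though, the decisive step is missing.
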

Since the overflow is bounded by 
$\beta \leq (2-\gamma)/(1-\gamma)$, Theorem~\ref{thmBoundary} yields a bicriteria guarantee. We further show that this characterization is tight for the greedy algorithm (Theorem \ref{thm:tightness}) and establish bicriteria approximation guarantees for any truncation of the greedy process. Moreover, when the objective 
$f$ has bounded submodular curvature, our guarantees can be further strengthened. 
Finally, we consider, for a target value $\tau>0$, the dual supermodular minimization subject to a submodular coverage constraint:
\begin{eqnarray}
    \min_{S\subseteq V} \; g(S) \quad\text{subject to}\quad f(S) \ge \tau.
\end{eqnarray}
 
In Theorem~\ref{thmDual}, we present an efficient binary search reduction that converts any algorithm with bicriteria guarantees for primal submodular maximization 
into an algorithm with bicriteria guarantees for the dual supermodular minimization problem.
  



\subsection{Submodular Objectives and Supermodular Constraints in Machine Learning}
\label{subsection:SOSCinML}

By providing a tight characterization of the performance of greedy methods for this fundamental family of submodular/supermodular optimization problems—particularly in terms of a less restrictive notion of supermodular curvature—we aim to capture the algorithmic potential of these methods for data-driven applications.

Submodularity is a foundational concept in machine learning, underpinning a wide range of tasks in selection, summarization, and interpretability. For example, in interpretable decision rule learning~\cite{yang2021learning}, the objective is formulated as a regularized submodular maximization problem that balances predictive coverage (a monotone submodular function) with modular penalties for complexity and overlap. The resulting distorted greedy algorithms achieve strong $(1-1/e)$
 approximation guarantees while scaling efficiently to high-dimensional data.
In image attribution, submodular subset selection~\cite{chen2024less} enables the identification of compact, diverse, and semantically meaningful regions that drive model predictions, with objectives explicitly combining confidence, diversity, consistency, and collaboration scores—all shown to be submodular. In this setting, greedy optimization not only improves interpretability but also outperforms both black-box and white-box baselines on standard evaluation metrics.
More broadly, frameworks such as PRISM~\cite{kothawade2022prism} introduce parameterized submodular information measures, including Submodular Mutual Information (SMI) and Submodular Conditional Gain (SCG), to guide data selection for targeted learning, privacy-preserving summarization, and rare class discovery. These approaches leverage the theoretical guarantees and scalability of submodular maximization, enabling applications ranging from active learning to large-scale data summarization. Recent advances further extend submodular optimization to dynamic action space selection in large language models~\cite{zhao2025dynaact}, where submodular objectives promote both utility and diversity in candidate actions.

However, much of the existing literature relies on simplifying constraints to cardinality bounds or linear (modular) cost functions in order to exploit known approximation results. In contrast, many practical constraints are inherently non-linear and exhibit supermodular behavior, in which the marginal cost of adding an element increases with the size of the selected set. The following examples from machine learning and other realistic applications illustrate that supermodular cost structures arise naturally across diverse domains, motivating the need for principled approximation guarantees beyond linear or cardinality constraints.


\textbf{Active Learning with Batch Costs:}
Selecting a batch of samples for human labeling naturally induces a submodular objective, as adding new examples to an already large set yields diminishing returns. Meanwhile, the time cost incurred by an expert to label additional samples may grow superlinearly when cross-referencing is required to ensure label consistency.

 \textbf{Feature Selection with Model Costs:}
Selecting features to maximize a submodular information-gain objective subject to a model cost budget—measured, for example, by computational cost or representational capacity—may incur superlinear costs as the number of selected features increases.

\textbf{Sensor Network Design with Interference:}
When deploying a sensor network to maximize coverage, adding a sensor can incur supermodular costs due to increased communication interference or data synchronization overhead in dense deployments.

\textbf{Debating Agents for Factuality and Reasoning:}
Multi-agent debate frameworks improve the reliability and factuality of large language models by coordinating multiple agents to iteratively propose, critique, and refine candidate answers. This process enforces cross-verification among agents, reducing hallucinations and improving reasoning performance. 
Empirical studies show that increasing the number of agents or debate rounds leads to better performance on tasks such as mathematical problem solving and visual question answering. The approach is plug-and-play, compatible with diverse LLMs, and demonstrates the power of collective intelligence in AI systems.
Increasing the number of agents or debate rounds typically yields diminishing gains in accuracy while incurring superlinear coordination and communication costs, naturally giving rise to supermodular cost structures.

The dual variant—minimizing a supermodular cost subject to a submodular coverage requirement—is equally relevant. For example, in privacy-preserving data summarization, one may seek to minimize cumulative privacy risk, which can be supermodular when combinations of data points reveal more information than the sum of their individual contributions, while ensuring that the selected summary attains a target level of utility or representativeness. More generally, any of the above problems can be reformulated to emphasize cost minimization under a 
coverage constraint.

\subsection{Approximation Results in Similar Problems}
Submodular function optimization has been extensively studied. For maximization
under a cardinality constraint, the simple greedy algorithm achieves a ratio of $(1-1/e) \approx 0.632$, which is optimal under $P \ne NP$~\cite{nemhauser1978analysis,kempe2003maximizing}; for a modular knapsack constraint, an augmented greedy algorithm achieves the same $(1-1/e)$ guarantee.
Tigher bound can be achieved when considering 
curvature
\cite{ModularityCurvatureSviridenko}.

When constraints become non-modular, the problem becomes significantly more difficult. Maximizing submodular functions under a submodular constraint is not known to admit constant-factor approximations. This problem can be seen as the complement of ours. It considers an upper bound on a submodular function as the constraint, which can be transformed into a lower bound on a supermodular function. 
On this problem, existing work has achieved approximation with data-dependent additive error~\cite{padmanabhan2023maximizing}.  
Alternatively, constant-factor bicriteria approximations have been obtained for formulations of the problem with surrogate functions that approximate the original submodular constraint~\cite{chen2023bicriteria}.
The sum of submodular and supermodular functions has also been studied under cardinality constraints when both functions have bounded curvature~\cite{bai2018greed,SHI2025113}.
To the best of our knowledge, there are no prior constant approximation guarantees (bicriteria or otherwise) for maximizing a monotone submodular function under a supermodular cost constraint.
Bicriteria approximations, which satisfy the objective target while violating the budget, provide a powerful framework for obtaining constant-factor guarantees in these challenging settings.


\section{Background}

\label{sec:sub-sup}
A set function $f:2^V \to \mathbb{R}$ is \textit{submodular} if for every $A \subseteq B \subseteq V$ and $e \in V \setminus B$, it satisfies the property of diminishing marginal returns: $f(A \cup \{e\}) - f(A) \ge f(B \cup \{e\}) - f(B).$
Equivalently, $f$ is submodular if for all $S, T \subseteq V$, $f(S \cup T) + f(S \cap T) \le f(S) + f(T)$. 

A function $g$ is \textit{supermodular} if for $A \subseteq B \subseteq V$ and $e \in V \setminus B$:
\[ g(A \cup \{e\}) - g(A) \le g(B \cup \{e\}) - g(B). \]
In this work, we focus on functions that are \textit{monotone} ($f(A) \le f(B)$ for $A \subseteq B$) and \textit{grounded} ($f(\varnothing)=0, g(\varnothing)=0$).

As in (\ref{Eqn:SupermodularCurvature}), we define the \textit{curvature} of a monotone supermodular function $g$ as $\gamma = 1 - \min_{g(T)\geq g(S)} \frac{g(T)}{g(T | S)}$. The value of $\gamma$ lies in $[0, 1]$, if $\gamma =0$, the function is modular. In work~\cite{bai2018greed}, curvature is defined as $\gamma' = 1 - \min_{v, S} \frac{g(v)}{g(v | S)}$. Our definition is more general since $\gamma \leq \gamma'$, allowing for a broader class of supermodular functions to be analyzed. As an example, consider the function $g(S) = |S|^2$. Based on our definition:
\begin{align*}
  \min_{g(T) \geq g(S)} \frac{g(T)}{g(T | S)}
  &\geq \min_{g(T) \geq g(S)} \frac{|T|^2}{( (|T|+|S|)^2 - |S|^2 )}  \\
  &\geq\min_{n \geq 1} \frac{n^2}{((2n)^2 - (n)^2)} \geq 1/3.
\end{align*}
As a result, the curvature $\gamma \leq 1 - \frac{1}{3} = \frac{2}{3}$.
On the other hand, using the definition from~\cite{bai2018greed}, $\gamma' = 1 - \min_{v, S} \frac{g(v)}{g(v | S)} = 1 - \min_{v, S} \frac{1}{2|S| + 1} = 1$. 

We will further strengthen our analysis under the assumption that the curvature of the submodular objective 
$f$ is bounded. We use the standard definition of submodular curvature:
\begin{eqnarray}\label{Eqn:SubmodularCurvature}
c \;=\; 1 - \min_{v,S}\frac{f(v\mid S)}{f(v)}\in[0,1).
\end{eqnarray}

\section{Methods and Analysis}

The greedy algorithm is an intuitive and conceptually simple approach to subset selection problems. The selection of the locally optimal element at each step is easy to communicate to non-experts and has proved remarkably effective in practice. Although more sophisticated optimization techniques or specialized heuristics could be applied to submodular maximization under supermodular costs, we deliberately focus on the greedy paradigm. We demonstrate that a refined variant of the greedy rule suffices to obtain strong bicriteria approximation guarantees. Our results indicate that even in the presence of highly non-linear constraints, greedy methods remain both theoretically robust and computationally efficient.


\subsection{Greedy Selection based on Ratio-Marginal}
We use a greedy rule, which at each iteration picks the element with the largest ratio of marginal objective gain to marginal cost, until the constraint is reached or exceeded (Algorithm~\ref{alg:greedy}). 
The returned set may slightly violate the budget; we give bicriteria bounds that trade approximation for constraint violation.

\begin{algorithm}[h]
\caption{Greedy ratio algorithm}
\label{alg:greedy}
\begin{algorithmic}[1]
\Statex \textbf{Input:} ground set $V$, functions $f,g$, and budget $\theta$ 
\State $S_0 \leftarrow \varnothing, i\gets 0$
\While{$\exists v \notin S_i$ and $g(S_i) < \theta$}
    \State $v \gets \arg \max_{u \in V\setminus S_i} \dfrac{f(u \mid S_i)}{g(u \mid S_i)}$
    \State $S_{i+1} \leftarrow S_{i} \cup \{v\},\,\,\, i\gets i+1$
\EndWhile
\State \Return $S$
\end{algorithmic}
\end{algorithm}

\subsection{Main Guarantees}
We first present bicriteria guarantees that depend on the curvature of $g$ that analyzes the bounds when the greedy algorithm overflows the constraint for the first time. The steps leading to this analysis also reveal the approximation obtained from earlier steps.
We will then extend the results to the case when the curvature of $f$ is also known. Then we explore how the approximation improves as we continue to grow the set beyond the constraint.

\begin{restatable}[Bicriteria Approximation with Curvature]{theorem}{main-thm}\label{thm:main-thm}
For any monotone submodular $f$ and
supermodular $g$ with curvature $\gamma < 1$. If the greedy algorithm returns a set $S_k$, then
\begin{eqnarray}
g(S_k) &\leq & \frac{2 - \gamma}{1 - \gamma} \cdot \theta, \quad {\rm and }\\
f(S_k) & \geq & \left(1 - e^{-(1 - \gamma)} \right) f(S^*),
\end{eqnarray}
where $S^*$ is an optimal solution to $\max f(S)$ subject to $g(S) \leq \theta$.
\end{restatable}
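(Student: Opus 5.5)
The plan has three steps: prove a per-step recurrence on the gap to optimum, unroll it to get the value guarantee, and bound the final overflow using the curvature definition with $T=\{v_k\}$.

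\textbf{Step 1: per-step recurrence.} Let $S_i$ be the greedy prefix after $i$ steps, $v_{i+1}$ the next element, and $\Delta_i = f(S^*) - f(S_i)$.

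By monotonicity and submodularity,
\[
f(S^*) - f(S_i) \le f(S^*\mid S_i) \le \sum_{u\in S^*\setminus S_i} f(u\mid S_i).
\]
Each term satisfies $f(u\mid S_i)\le \rho_i\, g(u\mid S_i)$, where $\rho_i = f(v_{i+1}\mid S_i)/g(v_{i+1}\mid S_i)$ is the greedy ratio. Hence $\Delta_i \le \rho_i \sum_{u\in S^*\setminus S_i} g(u\mid S_i)$.

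Since $g$ is supermodular, the sum of singleton marginals is at most the joint marginal: $\sum_u g(u\mid S_i) \le g(S^*\mid S_i)$. The curvature now enters.
\begin{itemize}
\item If $g(S^*) \ge g(S_i)$, take $T=S^*$, $S=S_i$ in the definition to get $g(S^*\mid S_i)\le g(S^*)/(1-\gamma)\le \theta/(1-\gamma)$.
\item If $g(S^*) < g(S_i)$, I would bound $g(S^*\mid S_i)$ through $g(S^*\cup S_i)$ instead.
\end{itemize}
The cleanest route is to restrict to steps $i\le k-1$. There $g(S_i)<\theta$, and we only need a usable bound on $g(S^*\mid S_i)$ of the form $\theta/(1-\gamma)$. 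I will assume, as the intended definition suggests, that the curvature inequality $g(T\mid S)\le g(T)/(1-\gamma)$ applies with $T=S^*$ whenever it is needed.

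This yields
\[
\Delta_i \le \frac{\theta}{1-\gamma}\,\frac{f(S_{i+1})-f(S_i)}{g(S_{i+1})-g(S_i)},
\qquad\text{i.e.}\qquad
\Delta_{i+1} \le \Delta_i\Bigl(1-\frac{(1-\gamma)\,\delta_i}{\theta}\Bigr),
\]
with $\delta_i=g(S_{i+1})-g(S_i)$.

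\textbf{Step 2: unroll to get the value guarantee.} Apply the recurrence for $i=0,\dots,k-1$ and use $1-x\le e^{-x}$:
\[
\Delta_k \le f(S^*)\exp\Bigl(-(1-\gamma)\,g(S_k)/\theta\Bigr).
\]
Since $S_k$ is the first prefix with $g(S_k)\ge\theta$, this gives $f(S_k)\ge (1-e^{-(1-\gamma)})f(S^*)$. This is also how Theorem~\ref{thmBoundary} arises with $\beta=g(S_k)/\theta$. If the algorithm instead stops because $V$ is exhausted, then $S_k=V\supseteq S^*$ and the value bound is trivial.

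\textbf{Step 3: bound the overflow.} Write $g(S_k)=g(S_{k-1})+g(v_k\mid S_{k-1})$, where $g(S_{k-1})<\theta$. Apply the curvature definition with $T=\{v_k\}$, $S=S_{k-1}$, in the case $g(v_k)\ge g(S_{k-1})$. This gives $g(v_k\mid S_{k-1})\le g(v_k)/(1-\gamma)$. We may assume $g(v_k)\le\theta$, since elements with $g(v)>\theta$ can never be in $S^*$ and are discarded beforehand (a standard preprocessing step). Then
\[
g(S_k)\le \theta+\frac{\theta}{1-\gamma}=\frac{2-\gamma}{1-\gamma}\,\theta.
\]
In the other case, $g(v_k)<g(S_{k-1})$, take $T=S_{k-1}$ and $S=\{v_k\}$ in the definition. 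Then
\[
g(S_k)-g(v_k) = g(S_{k-1}\mid v_k) \le \frac{g(S_{k-1})}{1-\gamma} < \frac{\theta}{1-\gamma},
\]
so $g(S_k)<\theta/(1-\gamma)+\theta$. This is the same bound.

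\textbf{Main obstacle.} The delicate step is Step 1's use of the curvature when $g(S^*)<g(S_i)$, because the definition only applies once the roles are ordered. I would first try the symmetric trick from Step 3. Bound
\[
g(S^*\mid S_i) = g(S^*\cup S_i)-g(S_i) = g(S^*)+g(S_i\mid S^*)-g(S_i),
\]
and apply the curvature to $g(S_i\mid S^*)\le g(S_i)/(1-\gamma)$. This gives $g(S^*\mid S_i)\le \theta+\tfrac{\gamma}{1-\gamma}\,g(S_i)\le \theta/(1-\gamma)$, using $g(S_i)<\theta$. So the recurrence holds in both cases.
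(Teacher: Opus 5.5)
Your argument is correct. Its skeleton matches the paper's: the greedy ratio compared against the elements of $S^*$, the supermodular sum bound, the curvature bound, exponential unrolling, and the two-case overflow bound with $T=\{v_k\}$ or $T=S_{k-1}$. The genuine difference is how you handle steps where $g(S_i)>g(S^*)$.

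The paper normalizes costs by $g(S^*)$ and uses $g(S^*\mid S_i)\le g(S^*)/(1-\gamma)$, which the definition only licenses while $g(S_i)\le g(S^*)$. It proves $f(S_i)\ge\bigl(1-e^{-(1-\gamma)g(S_i)/g(S^*)}\bigr)f(S^*)$ in that range. It then argues separately that if $g$ crosses $g(S^*)$ before the budget is reached, the bound with $G\ge 1$ already gives $1-e^{-(1-\gamma)}$, and monotonicity preserves it.

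You instead normalize by $\theta$ and use the reversed-role decomposition $g(S^*\mid S_i)=g(S^*)+g(S_i\mid S^*)-g(S_i)\le g(S^*)+\frac{\gamma}{1-\gamma}g(S_i)$. The paper only uses this decomposition later, for Theorem~\ref{thm:general}. It gives you the uniform bound $g(S^*\mid S_i)\le\theta/(1-\gamma)$ at every step with $g(S_i)<\theta$.

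What your route buys: a single uninterrupted recurrence with no case split, ending in $f(S_k)\ge\bigl(1-e^{-\beta(1-\gamma)}\bigr)f(S^*)$ with $\beta=g(S_k)/\theta$ unconditionally. That is, alternative (ii) of Theorem~\ref{thmBoundary} always holds, and the disjunction is unnecessary. Bounding $f(S^*)-f(S_i)\le f(S^*\mid S_i)$ directly also avoids the paper's $i$-dependent $f(S^*\cup S_i)$ inside the induction.

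What the paper's route buys: a sharper intermediate bound when $g(S^*)<\theta$, for as long as it is valid. Its $G_i=g_i/g(S^*)$ recurrence is also the framework reused in Lemma~\ref{lem:curv_recur} and Theorem~\ref{thm:curv-f}.

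Two small cleanups. First, delete the sentence in Step 1 where you ``assume'' the curvature inequality with $T=S^*$ whenever it is needed. The paper's definition does not grant it, and your final paragraph makes it unnecessary. Second, when unrolling, note what happens if some $\Delta_i\le 0$, or if some factor $1-(1-\gamma)\delta_i/\theta$ is negative. In either case $f(S_{i+1})\ge f(S^*)$ already, so you can stop there by monotonicity.
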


\begin{proof}
Let $S_i$ be the current solution produced by the greedy algorithm, and let $S^* = \{v_1^*, v_2^*, \dots\}$ be the optimal solution.
Define a sequence of sets $A_j := S_i \cup \{v_1^*, \dots, v_j^*\} \quad \text{for } j = 0, 1, \dots, k.$
By the greedy selection at iteration $i+1$, we have
\begin{equation}
\frac{f(v_{i+1} \mid S_i)}{g(v_{i+1} \mid S_i)} \geq \max_j \frac{f(v_j^* \mid S_i)}{g(v_j^* \mid S_i)},
\end{equation}
where $\{v_j^*\}$ are elements of the optimal solution $S^*$.
By inequality
$
\max_j \frac{a_j}{b_j} \geq \frac{\sum_j a_j}{\sum_j b_j}
$
for positive sequences $\{a_j\}$ and $\{b_j\}$, we have
\begin{equation}
\max_j \frac{f(v_j^* \mid S_i)}{g(v_j^* \mid S_i)} \geq \frac{\sum_j f(v_j^* \mid S_i)}{\sum_j g(v_j^* \mid S_i)}.
\end{equation}

By submodularity of $f$ and supermodularity of $g$, along with the definitions of the intermediate sets $A_j = S_i \cup \{v_1^*, \ldots, v_j^*\}$, we have
\begin{equation}
\sum_j f(v_j^* \mid S_i) \geq \sum_j f(v_j^* \mid A_{j-1}) = \sum \left(f(A_j) - f(A_{j-1})\right) = f(S^* \mid S_i)
\end{equation}
and
\begin{equation}
\sum_j g(v_j^* \mid S_i) \leq \sum_j g(v_j^* \mid A_{j-1}) = \sum \left(g(A_j) - g(A_{j-1})\right) = g(S^* \mid S_i)
\end{equation}
Combining these inequalities yields
\begin{equation}
\frac{f(v_{i+1} \mid S_i)}{g(v_{i+1} \mid S_i)} \geq \frac{f(S^* \mid S_i)}{g(S^* \mid S_i)}.
\end{equation}
Multiplying both sides by $g(v_{i+1} \mid S_i)$:
\begin{equation}
f(v_{i+1} \mid S_i) \geq \frac{f(S^* \mid S_i)}{g(S^* \mid S_i)} \cdot g(v_{i+1} \mid S_i).
\end{equation}
\begin{equation}\label{eqn:first-recur}
\implies f_{i+1} - f_i \geq \frac{f^{\circledast} - f_i}{g(S^* \mid S_i)} \cdot (g_{i+1} - g_i),
\end{equation}
where $f_i := f(S_i)$, and $f(S^* \cup S_i) := f^{\circledast} \geq  f(S^*)$, and $g_i := g(S_i)$.
Using the curvature bound:
\begin{equation}\label{eqn:using_curvature}
g(S^* \mid S_i) \leq \frac{g(S^*)}{1 - \gamma},
\quad \Rightarrow \quad \frac{1}{g(S^* \mid S_i)} \geq \frac{1 - \gamma}{g(S^*)}.    
\end{equation}
Substituting this in Inequality~\ref{eqn:first-recur},
\begin{equation}
f_{i+1} - f_i \geq (f^{\circledast} - f_i) \cdot \frac{1 - \gamma}{g(S^*)} \cdot (g_{i+1} - g_i).
\end{equation}

Let $G_i = \frac{g_i}{g(S^*)}$. Then the recurrence becomes:
\begin{equation}\label{eqn:first_recur}
f_{i+1} - f_i \geq (f^{\circledast} - f_i)(1-\gamma)(G_{i+1} - G_i).
\end{equation}

We now show by induction $f_i \geq \left(1 - e^{-(1-\gamma)G_i} \right) f^{\circledast}.$
The base case is straight-forward to verify with $f_0 = 0$, $G_0 = 0$. Suppose $f_i \geq (1 - e^{-(1-\gamma)G_i}) f^{\circledast}$. Then:
\begin{align*}
f_{i+1} &\geq f_i + (f^{\circledast} - f_i)((1-\gamma)G_{i+1} - (1-\gamma)G_i) \\
&\geq \left(1 - e^{-(1-\gamma)G_i} \right) f^{\circledast} + \left(e^{-(1-\gamma)G_i} f^{\circledast} \right)((1-\gamma)G_{i+1} - (1-\gamma)G_i) \\
&= f^{\circledast} \left(1 - e^{-(1-\gamma)G_{i+1}} \right).
\end{align*}
Therefore,
\begin{equation}
f_k \geq (1 - e^{-\tilde{g}_k}) f^{\circledast} \geq (1 - e^{-(1-\gamma)G_k}) f^{*}.
\end{equation}

Now suppose the algorithm halts the first time $g(S_k) \geq \theta$, (i.e., $g(S_{k-1}) < \theta$). Note that the above bound applies only for $g(S_{k-1}) \leq g(S^*)$ due to the use of our curvature definition. However, if $g(S_{k'}) > g(S^*)$ for some $k' \leq k-1$, then we already have
\begin{align}
f_{k'+1} &\geq  (1 - e^{-(1-\gamma)G_{k'+1}})f^* = \left(1 - e^{-(1-\gamma)f^*  \frac{g(S_{k'+1})}{g(S^*)}}\right)f^* \\
&\geq  (1-e^{-(1-\gamma)})f^*\,.  \label{eqn:early_terminate}
\end{align}
That is, if we had stopped earlier, we would have the desired approximation already. Due to the monotonicity of $f(\cdot)$, the approximation guarantee is preserved.
Now, for the violation of the constraint, 
suppose $g(v_k) \geq g(S_{k-1})$. Then, by curvature,
\begin{align*}
\frac{g(v_k)}{g(v_{k} \mid S_{k-1})} \geq 1-\gamma 
\implies \frac{g(v_k)}{g(S_{k}) - g(S_{k-1})} &\geq 1-\gamma \\
\implies g(S_{k}) \leq \frac{g(v_{k})}{1-\gamma} + g(S_{k-1}) 
\implies g(S_{k}) &\leq \frac{\theta}{1-\gamma} + \theta 
\\
&= \frac{2-\gamma}{1-\gamma}\theta.
\end{align*}
Here, we use $g(v_k) \leq \theta$ -- we can remove any element from the given set that by itself would have violated the constraint. The same can be derived if $g(S_k) \geq g(v_k)$.

Therefore, $f(S_k) \geq \left(1 - e^{-(1-\gamma)}\right)f(S^*)$ and $g(S_k) \leq \frac{2-\gamma}{1-\gamma}\theta$.
\end{proof}

\textbf{Remark:} One may expect to get a $\left(1 - e^{-(1-\gamma)}\right)$ approximation without violating the constraint. However, the scenario leading to the constraint violation is that $g(S_{k-1}) = \theta - \epsilon$ for very small $\epsilon$. And the next element adds the maximum marginal increase allowable by the curvature, leading to $g(S_k) = \frac{2-\gamma}{1-\gamma}\theta - \epsilon$. In practice, we can obtain a better guarantee once we know the value of $g(S_i)$ for any step $i$ of the greedy algorithm.

\begin{corollary}\label{cor:before_overflow}
    If $g(S_i) = \beta \theta$, then the Greedy at step $i$ results in a $\left(1-e^{-\beta(1-\gamma)}, \beta \right)$ approximation for $\beta < 1$.
\end{corollary}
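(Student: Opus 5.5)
The corollary should follow from the invariant established inside the proof of Theorem~\ref{thm:main-thm}. The key step is to reuse the inductive bound
\begin{equation*}
f_i \geq \left(1 - e^{-(1-\gamma)G_i}\right) f^{\circledast}, \qquad G_i = \frac{g(S_i)}{g(S^*)},
\end{equation*}
which was derived from the ratio-greedy recurrence~\eqref{eqn:first_recur}. That recurrence in turn relied on the curvature bound~\eqref{eqn:using_curvature}, $g(S^*\mid S_j) \le g(S^*)/(1-\gamma)$, being applicable at every step $j<i$. The plan has three parts: verify this applicability, translate $G_i$ into $\beta$, and read off the bicriteria pair.

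\textbf{Applicability of the curvature bound.} The curvature definition~\eqref{Eqn:SupermodularCurvature} requires $g(S^*) \ge g(S_j)$ when applied with $T=S^*$ and $S=S_j$. Since $g$ is monotone along the greedy chain, $g(S_j) \le g(S_i) = \beta\theta$ for all $j \le i$. If $g(S^*) \ge \beta\theta$, this requirement holds for every $j<i$, so the induction runs up to step $i$. If instead $g(S^*) < g(S_{j'})$ for some $j' \le i$, I would use the early-termination argument~\eqref{eqn:early_terminate}. It shows that the first such step already satisfies $f \ge (1-e^{-(1-\gamma)})f^*$, and by monotonicity $f(S_i)$ does as well. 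Because $\beta<1$, this is at least $(1-e^{-\beta(1-\gamma)})f^*$, so this case is settled.

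\textbf{Translating $G_i$ into $\beta$.} In the main case, feasibility of $S^*$ gives $g(S^*) \le \theta$. Hence $G_i = \beta\theta/g(S^*) \ge \beta$. Since $1-e^{-x}$ is increasing in $x$,
\begin{equation*}
f(S_i) \ge \left(1-e^{-(1-\gamma)G_i}\right) f^{\circledast} \ge \left(1-e^{-\beta(1-\gamma)}\right) f(S^*),
\end{equation*}
using $f^{\circledast}\ge f(S^*)$.

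\textbf{Conclusion and main obstacle.} On the cost side, $g(S_i)=\beta\theta$ by hypothesis, which yields the stated $\left(1-e^{-\beta(1-\gamma)},\beta\right)$ guarantee. The set is in fact feasible, since $\beta<1$. The only delicate point is the one in the first part: checking that the condition $g(T)\ge g(S)$ in the curvature definition is met at every step up to $i$. The case split above handles it, and everything else is a direct reuse of the invariant from Theorem~\ref{thm:main-thm}.
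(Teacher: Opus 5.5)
Your proposal is correct and follows essentially the same route as the paper: the one-line proof there also reuses the invariant $f_i \ge (1-e^{-(1-\gamma)G_i})f^{\circledast}$ from the proof of Theorem~\ref{thm:main-thm}, and uses $g(S^*)\le\theta$ to get $(1-\gamma)G_i \ge \beta(1-\gamma)$. Your case split on whether $g(S_j)\le g(S^*)$ holds for all $j<i$, with the early-termination bound as the fallback, fills in the check that the curvature bound applies, which the paper leaves implicit.
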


The proof follows from the fact that $(1-\gamma)G_i = (1-\gamma)\frac{g(S_i)}{g(S^*)} \geq (1-\gamma)\frac{\beta \theta}{\theta} = \beta(1-\gamma)$. 

%
%

    
    

We can now prove Theorem \ref{thmBoundary}, which suggests that either we get the desired approximation before the algorithm terminates, or we get an improved factor at termination at the cost of overflow.
\begin{proof}(of Theorem \ref{thmBoundary})
If $g(S_{k'}) \geq g(S^*)$ for any $k' < k$, then we have the desired approximation as noted in Equation~\ref{eqn:early_terminate}. If not, then $S_k$ is the first time $g(S_k) > g(S^*)$, which means $g(S_{k-1}) \leq g(S^*)$, allowing us to follow the derivation of Theorem~\ref{thm:main-thm} to get $\tilde{g}_k = (1-\gamma)\frac{g(S_k)}{g(S^*)} \geq (1-\gamma)\frac{g(S_k)}{\theta} = \beta (1-\gamma)$.
\end{proof}

On the other hand, if we use the strong curvature bound $\gamma'$, then $g(S^*|S_i) \leq \frac{g(S^*)}{1-\gamma'}$ holds for all $S_i$. As a result, we can continue the greedy algorithm beyond the constraint and continue to improve the approximation factor at the cost of higher constraint violations.

\begin{corollary}[Greedy Approximation for Strict Supermodular Curvature]\label{cor:overflow-traditional}
        If $g(S_i) = \beta \theta$, then the greedy algorithm at step $i$ results in a $\left(1-e^{-\beta(1-\gamma')}, \beta \right)$ approximation $\forall \beta$, where $\gamma'$ is using the stricter definition of curvature of $g$.
\end{corollary}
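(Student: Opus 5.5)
The plan is to rerun the derivation in the proof of Theorem~\ref{thm:main-thm} verbatim up to the recurrence~\eqref{eqn:first_recur}, and to observe that the only place our weaker curvature $\gamma$ forced a restriction was the step~\eqref{eqn:using_curvature}. There, $g(S^*\mid S_i)\le g(S^*)/(1-\gamma)$ was justified only while $g(S^*)\ge g(S_i)$. Under the strict definition $\gamma' = 1-\min_{v,S} g(v)/g(v\mid S)$, I will first show that this bound holds for \emph{every} set $S_i$, with no comparison between $g(S_i)$ and $g(S^*)$.

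\textbf{Step 1: the curvature bound for arbitrary $S_i$.} Write $S^*\setminus S_i=\{u_1,\dots,u_m\}$ and set $B_j = S_i\cup\{u_1,\dots,u_j\}$. Telescoping gives
\begin{equation*}
g(S^*\mid S_i)\le \sum_j g(u_j\mid B_{j-1}) \le \frac{1}{1-\gamma'}\sum_j g(u_j).
\end{equation*}
The first inequality uses $g(S^*\cup S_i)-g(S_i)$ together with $g(S^*\mid S_i)\le g(S^*\cup S_i)-g(S_i)$; the second applies the definition of $\gamma'$ elementwise. Since $g$ is supermodular and normalized, it is superadditive, so $\sum_j g(u_j)\le g(S^*\setminus S_i)\le g(S^*)$ by monotonicity. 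Hence
\begin{equation*}
\frac{1}{g(S^*\mid S_i)}\ge \frac{1-\gamma'}{g(S^*)}
\end{equation*}
holds at every iteration $i$, including iterations after the budget $\theta$ has been exceeded.

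\textbf{Step 2: the recurrence and the induction.} Substituting Step~1 into~\eqref{eqn:first-recur} gives, for all $i$,
\begin{equation*}
f_{i+1}-f_i\ge (f^{\circledast}-f_i)(1-\gamma')(G_{i+1}-G_i).
\end{equation*}
The same induction as in Theorem~\ref{thm:main-thm}, using $1-x\le e^{-x}$, then yields $f_i\ge(1-e^{-(1-\gamma')G_i})f^{\circledast}$. This now holds for every step of the greedy process, provided we conceptually let the loop continue past the first overflow. Note that $f^{\circledast}=f(S^*\cup S_i)\ge f(S^*)$ still holds.

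\textbf{Step 3: converting to the stated bicriteria form.} Since $S^*$ is feasible, $g(S^*)\le\theta$. Therefore
\begin{equation*}
G_i=\frac{g(S_i)}{g(S^*)}\ge \frac{\beta\theta}{\theta}=\beta,
\end{equation*}
so $f(S_i)\ge(1-e^{-\beta(1-\gamma')})f(S^*)$ with constraint value exactly $\beta\theta$, for any $\beta>0$. This is the claimed $(1-e^{-\beta(1-\gamma')},\beta)$ guarantee.

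The main obstacle I expect is Step~1. Bounding $g(S^*\mid S_i)$ requires passing from per-element marginals to $g(S^*)$, and superadditivity of supermodular normalized $g$ is the crucial ingredient. I would double-check that the ratio inequality and the telescoping argument remain valid when some elements of $S^*$ already lie in $S_i$; those elements contribute zero marginal and can simply be dropped. The degenerate case $g(S^*\mid S_i)=0$ also needs care, but there $f(S^*\mid S_i)$ is handled trivially or the inequality holds vacuously.
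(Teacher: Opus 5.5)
Your proposal is correct and follows the paper's route. The paper's only justification is the remark that under $\gamma'$ the bound $g(S^*\mid S_i)\le g(S^*)/(1-\gamma')$ holds for every $S_i$, so the recurrence and induction from the proof of Theorem~\ref{thm:main-thm} continue past the budget with $G_i\ge\beta$; your Step~1 (elementwise curvature plus superadditivity of normalized supermodular $g$) makes that remark explicit. One small tightening, which applies equally to the paper: run the induction against the fixed target $f(S^*)$ rather than $f^{\circledast}=f(S^*\cup S_i)$, which changes with $i$; the recurrence still holds because $f(S^*\mid S_i)\ge f(S^*)-f_i$ by monotonicity.
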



We further prove that there cannot exist a better bound than $(1 - e^{-(1-\gamma)}$ for our Greedy algorithm without violating the constraint.

\begin{figure}
    \includegraphics[width=\columnwidth]{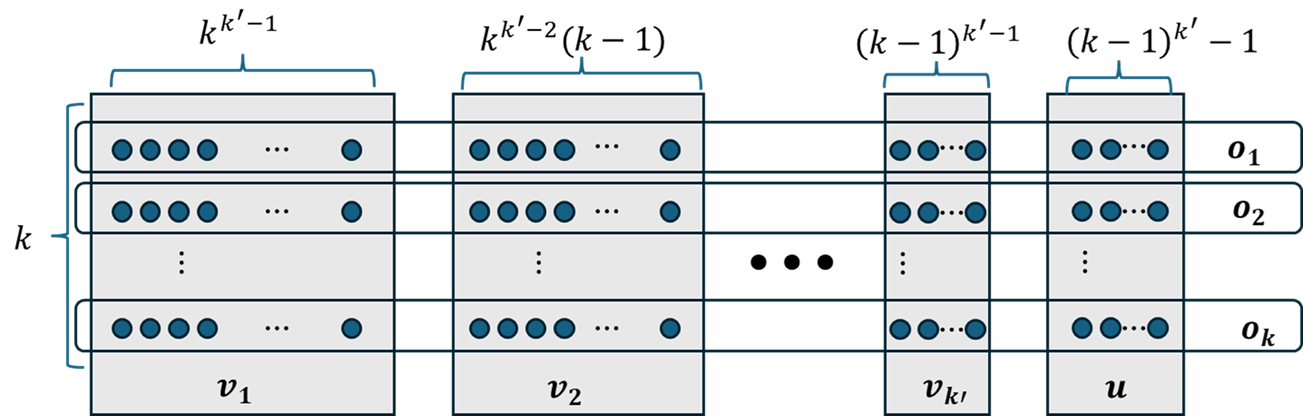}
    \caption{Instance of max cover with supermodular cost.}
    \label{fig:tightness}
\end{figure}

\begin{restatable}[Tightness of Greedy Approximation]{theorem}{thmTightness}\label{thm:tightness}
For any curvature $\gamma \in [0, 1)$, there exists an instance of non-negative monotone submodular maximization with a supermodular knapsack constraint of curvature $\gamma$, such that the approximation ratio of the Greedy algorithm is arbitrarily close to $1 - e^{-(1-\gamma)}$.
\end{restatable}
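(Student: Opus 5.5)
The construction adapts the classical max-cover tightness instance for greedy under a cardinality constraint, which forces $(1-1/e)$. We then distort the cost so that greedy's effective budget shrinks by the factor $(1-\gamma)$; this is what Figure~\ref{fig:tightness} suggests.

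\textbf{The instance.} Take a universe of weighted items partitioned into $k$ optimal sets $O_1,\dots,O_k$, each of weight $1/k$ and each covering $1/k$ of the total mass. Add greedy ``decoy'' sets $D_1,D_2,\dots$, where $D_j$ covers a $1/k$ fraction of what remains uncovered in every $O_\ell$ (the standard staircase). Then covering $D_1,\dots,D_m$ leaves $(1-1/k)^m$ of the mass uncovered. The objective $f$ is weighted coverage, which is monotone submodular.

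\textbf{The cost.} Optimal elements cost $g(O_\ell)=\theta/k$, and $g$ is additive on $S^*$, so $g(S^*)=\theta$. Decoys also have singleton cost $\theta/k$. However, every decoy chosen after or together with other elements interacts supermodularly, through a term like $\alpha\,(\text{number of decoys})^2$ or through interaction with the optimal elements. The effect is that the marginal cost of decoy $m$ given earlier decoys is roughly $\frac{\theta}{k}\cdot\frac{1}{1-\gamma}$ in aggregate: the total cost of $m$ decoys is about $\frac{m\theta}{k(1-\gamma)}$. Greedy's ratio rule then picks decoys as long as the decoy's ratio weakly beats every $O_\ell$. To break ties in favour of decoys, perturb the decoy coverage by an additive $\epsilon$.

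\textbf{Greedy's trajectory and value.} The budget $\theta$ is first reached after about $m=(1-\gamma)k$ decoys. At that point the coverage is $1-(1-1/k)^{(1-\gamma)k}\to 1-e^{-(1-\gamma)}$, while $f(S^*)=1$. Because the first decoys barely fit and the next decoy overflows, the algorithm's feasible output $S_{k-1}$ attains ratio arbitrarily close to $1-e^{-(1-\gamma)}$ as $k\to\infty$ and $\epsilon\to0$. This matches case (i) of Theorem~\ref{thmBoundary}. To make the overflowed set $S_k$ also not help, design the last step so that the overflow is negligible.

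\textbf{Verifying the curvature (main obstacle).} I must check that $g$ is monotone and supermodular and has curvature exactly $\gamma$ under definition~(\ref{Eqn:SupermodularCurvature}). That means $\min_{g(T)\ge g(S)} g(T)/g(T\mid S)=1-\gamma$. At the same time, the ratio comparison must still favour decoys. Neither holds automatically once the cost is spread across decoy interactions.

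To handle this, I would try a cost of the form $g(S)=\frac{\theta}{k}\bigl(|S\cap O| + |S\cap D| + \lambda\,h(S)\bigr)$, where $h$ is a supermodular interaction term tuned so that the worst ratio $g(T)/g(T\mid S)$ is attained with $T$ a decoy block and $S$ comparable in cost. Then I would solve for $\lambda$ to make that ratio equal $1-\gamma$. The decoy coverages are rescaled by the same per-step factor, which keeps the greedy marginal ratios at least the $O_\ell$ ratios; this is a check in the spirit of Theorem~\ref{thm:main-thm}'s recurrence. As a fallback, I would let the supermodular cost act only between decoys and $S^*$. The inequality~(\ref{eqn:using_curvature}) is then tight at every step, and the recurrence~(\ref{eqn:first_recur}) holds with equality, which gives tightness directly.
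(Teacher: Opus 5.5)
Your main construction has a genuine gap: it puts the extra cost on greedy's own picks, and the ratio rule sees that cost.

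\textbf{Why the main construction fails.} In your accounting the decoy set costs about $\frac{m\theta}{k(1-\gamma)}$. Nothing in your description raises the marginal cost of an $O_\ell$ given the decoys above $\theta/k$. Let $r_m$ be the uncovered mass after $m$ decoys. Then an $O_\ell$ offers ratio $r_m/\theta$, so decoy $m+1$ is chosen only if it covers at least $\frac{r_m}{(1-\gamma)k}$. That is a factor-$\frac{1}{1-\gamma}$ deficit, not a tie an $\epsilon$-perturbation can break. The rescaling in your last paragraph repairs the comparison but ruins the value. After the $(1-\gamma)k$ affordable decoys the uncovered mass is at most $\bigl(1-\frac{1}{(1-\gamma)k}\bigr)^{(1-\gamma)k}\le e^{-1}$, so greedy gets about $1-1/e$. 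Your formula $1-(1-1/k)^{(1-\gamma)k}$ assumes per-step coverage $1/k$, which contradicts the ratio comparison.

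This is structural, not a matter of tuning $\lambda$ or $h$. Whenever the interaction leaves $g(S^*\mid S_i)\le g(S^*)$, inequality (\ref{eqn:first-recur}) already gives $f(S_i)\ge\bigl(1-e^{-g(S_i)/g(S^*)}\bigr)f(S^*)$. In the proof of Theorem~\ref{thm:main-thm} the factor $1-\gamma$ enters only through (\ref{eqn:using_curvature}). So a tight instance must make the \emph{optimal} elements expensive given greedy's current set, while greedy's own picks stay cheap.

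\textbf{What the paper does, and how to make your fallback work.} The paper keeps the staircase decoys $v_1,\dots,v_{k'}$ modular with unit cost, each covering a $1/k$ fraction of what remains. Its only non-modularity is the jump $g(o_j\cup V_{k'})=k+\epsilon$ with $k'\approx(1-\gamma)k$. After the $k'$ cheap decoys every $o_j$ overflows the budget, so greedy's best feasible set is worth $1-(1-1/k)^{k'}$, with budget left unused.

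Your fallback (interaction only between decoys and $S^*$) is the right idea. However, for (\ref{eqn:using_curvature}) to be tight you need $g(S^*\mid S_i)=g(S^*)/(1-\gamma)$ already after the first greedy pick. A symmetric term like $|S\cap\mathcal{O}|\cdot|S\cap D|$ inflates it only gradually, so the bound is slack in the early steps. A version that works is
$g(S)=\frac{\theta}{k}\bigl(|S|+\frac{\gamma}{1-\gamma}|S\cap\mathcal{O}|\,[D_1\in S]\bigr)$, where $[D_1\in S]$ is $1$ if $D_1\in S$ and $0$ otherwise.
\begin{itemize}
\item It is a sum of products of indicators, hence monotone supermodular.
\item Checking the cases $D_1\in S$, $D_1\in T\setminus S$, $D_1\notin T\cup S$ gives $\min_{g(T)\ge g(S)}g(T)/g(T\mid S)=1-\gamma$, attained at $T=\{D_1\}$, $S=\{O_\ell\}$.
\item Let $D_1$ cover a $1/k$ fraction and each later $D_m$ a $(1-\gamma)/k$ fraction of the remaining mass, spread evenly over the $O_\ell$ and padded by $\epsilon$. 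Greedy then picks $D_1,\dots,D_k$ and spends exactly $\theta=g(S^*)$.
\item It therefore stops feasibly with value $1-(1-\frac1k)(1-\frac{1-\gamma}{k})^{k-1}\to1-e^{-(1-\gamma)}$.
\end{itemize}

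This also settles the curvature verification you flagged as the main obstacle, which the paper's jump construction leaves incomplete. The paper computes the curvature only at $T=V_{k'}$, $S=\{o_j\}$. But supermodularity forces $g(V_{k'}\cup\{o_j,o_l\})\ge 2(k+\epsilon)-k'$. So $S=\{o_j,o_l\}$ already gives a ratio of about $\frac{1-\gamma}{1+\gamma}<1-\gamma$ (for $\gamma>0$).
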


\begin{proof}
Consider the max-cover where we wish to cover the maximum number of elements by selecting subsets from a given collection while keeping the cost under $k$. For a given $d \in [0, 1)$, we define a positive integer $k'$ obtained from some function of $k$ and $\gamma$. We will later find an appropriate choice of $k'$.

We create an instance of this problem as shown in Figure~\ref{fig:tightness}. It consists of subcollection $V$ with subsets $\{v_1, \dots, v_{k'}\}$, subcollection $O$ with subsets $\{o_1, \dots, o_{k}\}$, and a subset $u$.

\paragraph{The Submodular Coverage Function $f$:}
Let $f(S) = |\bigcup_{s \in S} s|$ be the set coverage function. Based on our construction:
\[
f(v_i) = k\cdot k^{k'-i} \cdot (k-1)^{i-1} = k^{k'-(i-1)} \cdot (k-1)^{i-1},
\]
\[
f(u) = (k-1)^{k'} - 1\,,
\]

\begin{align*}
\text{and } f(o_i) &= \sum_{i=1}^{k'} k^{k'-i} \cdot (k-1)^{i-1} + (k-1)^{k'}-1 \\
&= (k - (k-1))\sum_{i=1}^{k'} k^{k'-i} \cdot (k-1)^{i-1} + (k-1)^{k'}-1 \\
&= k^{k'} - (k-1)^{k'} + (k-1)^{k'} - 1 =  k^{k'} - 1\,.
\end{align*}

\paragraph{The Supermodular Cost Function $g$, and choice of $k'$:}
We define a cost function $g: 2^{\mathcal{O} \cup \mathcal{G}} \to \mathbb{R}_{\geq 0}$ with a budget $B = k$. The individual costs are:
\begin{align*}
g(o_j) = 1 \quad \forall o_j, \,\,
g(v_i) = 1 \quad \forall v_i, \,\,
g(u) = \infty
\end{align*}
Due to $g(u)=\infty$, $u$ can never be selected by an algorithm.  The function $g$ is modular for all subsets $S \subseteq \mathcal{G}$ (i.e., $g(S) = |S|$). However, we introduce a supermodular ``jump'' when mixing sets. For the specific set $V_{k'} = \{v_1, \dots, v_{k'}\}$, and any $o_j \in \mathcal{O}$, we define:
\[
g(o_j \cup V_{k'}) = k + \epsilon
\]
where $\epsilon > 0$ is a small constant. As a result, the curvature of $g$ is
\begin{align}
    \gamma &= 1 - \frac{g(V_{k'})}{g(V_{k'}|o_j)} = 1 - \frac{k'}{k+\epsilon - 1} \nonumber \\
    \implies k' &= (k+\epsilon)(1-\gamma)\,.
\end{align}
Note that for any $\gamma$ and large enough integer $k$, we can choose a small epsilon so that $k'$ is an integer.

\paragraph{Execution of the Greedy Algorithm}

The Greedy algorithm iteratively selects the element $s$ maximizing the ratio $\rho(s) = \frac{\Delta f(s)}{\Delta g(s)}$.
We will show that the greedy algorithm will pick $v_1, v_2, \dots, v_{k'}$. For the first pick, 

\[
f(v_1) = k^{k'}, g(o_i) = k^{k'} - 1\,.
\]
So, $v_1$ is selected.
At step $i>1$, assume the greedy set already prepared is $V_{i-1} = \{v_1, v_2, \dots, v_{i-1}\}$. 

\noindent\textbf{Candidate $v_i$:} The marginal gain is
    $f(v_i | V_{i-1}) = k^{k'-(i-1)} \cdot (k-1)^{i-1}$ and at marginal cost 1.

\noindent \textbf{Candidate $o_j$:} 
    The best that $o_j$ additionally covers is the remaining uncovered elements in $o_j$. This is
    \begin{align*}
    &f(o_j|V_{i-1}) =\sum_{l=i}^{k'} k^{k'-l}(k-1)^{l-1} + (k-1)^{k'-1} - 1 \\
    &= (k-1)^{i-1}\sum_{l=1}^{k'-(i-1)} k^{k'-(i-1) - l}(k-1)^{l-1} + (k-1)^{k'-1} - 1 \\
    &=(k-1)^{i-1} \left( k^{k'-(i-1)} - (k-1)^{k' - (i-1)}\right) + (k-1)^{k'-1} - 1 \\
    &= k^{k' - (i-1)} (k-1)^{i-1} - 1\,.
    \end{align*}

Comparing the two, it follows that at iteration $i$ of the greedy algorithm, it must select $v_i$.

After $k'$ steps, the total cost incurred is $g(V_{k'}) = k'$. The remaining budget is $k - k'$.

We show that greedy must stop at this point. The only potential candidates are $u$ and one of $o_j$. Greedy can never select $u$ due to inifnite cost. So the only other option is $o_j$ for any $j$.
But $g(V_{k'} \cup o_j) = k + \epsilon$, violating the bound. 

Therefore, the total coverage achieved by greedy is
\begin{align*}
    f(V_{k'}) &= \sum_{i=1}^{k'} k^{k'-(i-1)} \cdot (k-1)^{i-1} \\
    &= k^{k'}\sum_{i=1}^{k'} \left( 1 -1/k\right)^{i-1} = k^{k'+1}\left( 1 - (1-1/k)^{k'}\right)\,.
\end{align*}
And the optimal coverage obtained by $S^* = {o_1, o_2, \dots, o_k}$ is
\begin{equation}
  f(S^*)  = k\cdot f(o_1) = k\cdot \left(k^{k'} - 1\right) = k^{k'+1}(1-1/k^{k'})
\end{equation}
Therefore, the approximation ratio of greedy is
\begin{align*}
    \frac{f(V_{k})}{f(S^*)} 
    &= \frac{k^{k'+1}\left( 1 - (1-1/k)^{k'}\right)}{k^{k'+1}(1-1/k^{k'})} = \frac{ 1 - (1-1/k)^{k'} }{1- 1/k^{k'}}
\end{align*}
Taking the limit as $k$ approaches $\infty$, we get
\begin{equation}
    \frac{f(V_{k})}{f(S^*)} = 1 - e^{-k'/k} = 1 - e^{-(k+\epsilon)(1-\gamma)/k} \approx 1 - e^{-(1-\gamma)} + \epsilon'\,,
\end{equation}
where $\epsilon'$ can be made arbitrarily close to zero with large $k$.
\end{proof}

\subsection{Impact of Submodular Curvature of $f$}

We obtain tighter bounds when the curvature of 
$f$ is also available.

\begin{restatable}{theorem}{thmCurvF}[Bicriteria Approximation with Curvature of $f$]\label{thm:curv-f}
Let $c$ be the curvature of $f$ (see Equation \ref{Eqn:SubmodularCurvature}).
The Greedy algorithm that overshoots the constraint by at most one element leads to
\[
f(S_k) \ge \bigl(1 - \left[1 - (1-c)(1-\gamma)\right]^{\frac{1}{1-c}}\bigr)\,f(S^*),
\]
\end{restatable}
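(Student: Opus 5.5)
The plan is to redo the recurrence from the proof of Theorem~\ref{thm:main-thm}, but with a sharper lower bound on the marginal gain of $S^*$ that uses the curvature $c$ of $f$. This is in the spirit of the Conforti--Cornu\'ejols analysis.

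\textbf{Step 1: a curvature-sharpened bound on $f(S^*\mid S_i)$.} Write $O=S^*$ and $f^*=f(O)$. Adding the elements of $S_i\setminus O$ to $O$ one at a time, each marginal gain is at least $(1-c)$ times the element's singleton value. Hence
\[
f(O\cup S_i)\ge f^*+(1-c)\sum_{v\in S_i\setminus O} f(v).
\]
By submodularity, $\sum_{v\in S_i}f(v)\ge f_i$. Treating $S_i\cap O$ as in the standard argument, this gives
\[
f(O\mid S_i)\ge f^*-f_i+(1-c)f_i = f^*-c\,f_i .
\]
Strictly, $S_i\cap O$ requires care. I will handle it by working with $f(O\cup S_i)-f_i$ and bounding $f(O\cup S_i)\ge f(O)+(1-c)f(S_i\setminus O)$. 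I then use $f_i\le f(S_i\cap O)+f(S_i\setminus O)$ together with $f(S_i\cap O)\le f^*$, adjusting constants as needed. This is the step I expect to be the main obstacle: getting the clean bound $f^*-cf_i$ when $S_i$ and $O$ overlap.

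\textbf{Step 2: the recurrence.} Plug Step 1 into the ratio argument from the proof of Theorem~\ref{thm:main-thm}. The greedy choice, the bound $\sum_j f(v_j^*\mid S_i)\ge f(O\mid S_i)$, supermodularity of $g$, and the curvature bound $g(O\mid S_i)\le g(O)/(1-\gamma)$ all go through unchanged. They are valid while $g(S_{i})\le g(S^*)$; otherwise we are in the early-termination case of Equation~\ref{eqn:early_terminate}, which I would redo with the new bound. The result is
\[
f_{i+1}-f_i\ge (1-\gamma)\,(f^*-c f_i)\,(G_{i+1}-G_i).
\]

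\textbf{Step 3: solving the recurrence.} Set $h_i=f^*-cf_i$. Then
\[
h_{i+1}\le h_i\bigl(1-c(1-\gamma)\Delta G_i\bigr).
\]
I would rather not use the exponential form here, since the target is the power form $[1-(1-c)(1-\gamma)]^{1/(1-c)}$. Instead I will use the convexity/normalization trick: the steps until overflow have total normalized cost $\sum\Delta G_i\ge 1$. I will then argue, in the style of Conforti--Cornu\'ejols, that the worst case is a single step (or equal steps) of total weight $1$.

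Concretely, I would compare to the continuous ODE $f'=(1-\gamma)(f^*-cf)$, which gives $f\ge \frac{f^*}{c}\bigl(1-e^{-c(1-\gamma)G}\bigr)$. This is weaker than the stated form, so I instead expect the argument to use a discrete-step bound with the specific exponent. Write $x=(1-c)(1-\gamma)$. Per unit step, the complement shrinks by the factor $1-x$, and the exponent $1/(1-c)$ comes from normalizing by the per-element loss $(1-c)$.

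I will verify the final inequality by checking that the function $\Delta\mapsto 1-(1-x\Delta)^{1/(1-c)}$ is dominated by the iterated recurrence. I will then evaluate at total $G_k\ge1$, using $g(S_k)\ge\theta\ge g(S^*)$ together with the overshoot-by-one-element argument of Theorem~\ref{thm:main-thm}.
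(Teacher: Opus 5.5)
Your Step 1 inequality $f(S^*\mid S_i)\ge f^*-c\,f_i$ is false whenever $S_i$ meets $S^*$, and adjusting constants does not fix it. If $S_i\subseteq S^*$ then $f(S^*\mid S_i)=f^*-f_i$ exactly, so no bound of the form $f^*-c'f_i$ with $c'<1$ can hold. For modular $f$ ($c=0$) your claim even reads $f(S^*\mid S_i)\ge f^*$.

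The Step 2 recurrence fails with it. Take $f,g$ modular, $V=S^*=\{a,b\}$, $f(a)=10$, $f(b)=1$, $g(a)=g(b)=1$, $\theta=2$. Greedy takes $a$ and then $b$, and at $i=1$ your recurrence asserts $1\ge 11\cdot\tfrac12$. This instance has unit costs, so it is a cardinality instance: no such per-step inequality holds even there.

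What is true is $f(S^*\mid S_i)\ge f^*-f_i+(1-c)f(S_i\setminus S^*)$. The missing idea is how to control $f(S_i\setminus S^*)$ along the greedy run. Every element greedy added outside $S^*$ had marginal ratio at least the current one, $\rho_{i+1}=\Delta f_{i+1}/\Delta g_{i+1}$. By submodularity of $f$ and supermodularity of $g$ this gives $f(S_i\setminus S^*)\ge\rho_{i+1}\,g(S_i\setminus S^*)$. Pair this with the sharpened denominator $g(S^*\mid S_i)\le\frac{g^*}{1-\gamma}+g(S_i\setminus S^*)-g_i$. Substitute both into $\rho_{i+1}\ge f(S^*\mid S_i)/g(S^*\mid S_i)$. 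The $g(S_i\setminus S^*)$ terms then combine with coefficient $c$, and bounding them by $g_i$ yields
\[
\bigl(g^*-(1-c)(1-\gamma)g_i\bigr)\,\Delta f_{i+1}\;\ge\;(1-\gamma)(f^*-f_i)\,\Delta g_{i+1}.
\]
The curvature of $f$ thus appears as a progress rate that grows with the cost already spent, not as a discount on $f_i$.

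Step 3 is not a derivation either. Your recurrence, if it held, would give $f_k\ge\frac{f^*}{c}\bigl(1-e^{-c(1-\gamma)G_k}\bigr)$ immediately from $1-y\le e^{-y}$. No ``worst case is a single step'' argument turns a constant-rate factor $1-c(1-\gamma)\Delta G$ into the power $[1-(1-c)(1-\gamma)]^{1/(1-c)}$. Also, that exponential bound is not weaker than the target: at $c=\tfrac12$, $\gamma=0$ it equals $2(1-e^{-1/2})\approx 0.787$, versus $0.75$. So the obstacle you located in Step 3 is misplaced; the defect is upstream.

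The exponent $1/(1-c)$ comes from the $G_i$-dependent rate. Normalize $F_i=f_i/f^*$ and $G_i=g_i/g^*$, and set $p=\frac{1}{1-c}$ and $q=\frac{1}{(1-c)(1-\gamma)}$. The recurrence becomes $1-F_{i+1}\le(1-F_i)\bigl(1-p\,\frac{G_{i+1}-G_i}{q-G_i}\bigr)$; if the bracket is nonpositive, $F_{i+1}\ge 1$ already. Bernoulli's inequality $(1-x)^p\ge 1-px$ bounds each factor by $\bigl(\frac{q-G_{i+1}}{q-G_i}\bigr)^p$. Telescope up to the first $k$ with $G_k\ge 1$; the curvature of $g$ is still usable there because $G_{k-1}\le 1$. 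This gives $1-F_k\le\bigl(1-(1-c)(1-\gamma)G_k\bigr)^{1/(1-c)}$, which is the claimed bound.
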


\begin{proof}[Proof idea]
To derive the guarantee, we define $f_i = f(S_i)$ and $g_i = g(S_i)$ as the objective and cost at iteration $i$, and let $f^* = f(S^*)$ and $g^* = g(S^*)$ be the values for an optimal solution $S^*$. We further partition the Greedy set $S_i$ into elements belonging to the optimal set, $\hat{g}_i = g(S_i \cap S^*)$, and other elements, $\bar{f}_i = f(S_i \setminus S^*)$ and $\bar{g}_i = g(S_i \setminus S^*)$.

By analyzing the marginal gains $\Delta f_{i+1} = f_{i+1} - f_i$ and $\Delta g_{i+1} = g_{i+1} - g_i$, and using the Greedy property, we can show that $\bar{f}_i \ge \frac{\Delta f_{i+1}}{\Delta g_{i+1}} \bar{g}_i$. Using the submodular curvature $c$, we lower-bound the marginal progress $f(S^* \mid S_i) \ge f^* - f_i + (1-c)\bar{f}_i$, and similarly use supermodular curvature $\gamma$ to bound the marginal cost $g(S^* \mid S_i)$. Combining these with the Greedy choice property, we obtain the discrete recurrence:
\[
  (g^* - (1-c)(1-\gamma) g_i)\,\Delta f_{i+1} \ge (f^* - f_i)(1-\gamma)\,\Delta g_{i+1},
\]
We then transform this recurrence using Bernoulli's inequality into one that can be solved by telescoping. 
See appendix for  details.
\end{proof}

\subsubsection{Continuing the Overflow}
In Corollary~\ref{cor:overflow-traditional}, we noted that with the traditional definition of supermodular curvature, we could continue the idea of proof of Theorem~\ref{thm:main-thm} beyond the constraint and continue to improve the approximation to get $\left(1 - e^{-\beta(1-\gamma)}, \beta\right)$. However, the same idea does not directly apply to our setting as we use the fact $g(S^*) \geq g(S_i)$. Therefore, we need to separately derive a recurrence relation that is valid when $g(S^*) < g(S_i)$. The following theorem bounds the approximation factor when we exceed the constraint by an arbitrary factor $\beta^+ > 1$.

\begin{restatable}[Bound Beyond Overflow]{theorem}{thmBoundBeyond}\label{thm:general}
    Suppose the Greedy algorithm is run for $K$
    and $g(S_K) = \beta^+ \theta$, then
    \[
    f(S_K) \geq 1 - \left(\frac{\beta^+ + \frac{1-\gamma}{\gamma}}{1 + \frac{1-\gamma}{\gamma}}\right)^{-\frac{1-\gamma}{\gamma}}(1 - (1-c)(1-\gamma))^{\frac{1}{1-c}}
    \]
\end{restatable}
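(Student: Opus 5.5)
Run the greedy process in two phases and chain the guarantees. Phase one covers steps up to the first index $k$ at which $g(S_k)\ge g(S^*)$; by Theorem~\ref{thm:curv-f} we have
\[
f^*-f_k\le \bigl[1-(1-c)(1-\gamma)\bigr]^{\frac{1}{1-c}}f^*
\]
whenever $g(S_k)\ge \theta\ge g(S^*)$. It is convenient to normalize so that this is the state at cost level $g_k\approx\theta$. Phase two covers steps $i\ge k$, where $g(S_i)>g(S^*)$ and the curvature bound (\ref{eqn:using_curvature}) is no longer available in the form $g(S^*\mid S_i)\le g(S^*)/(1-\gamma)$. For this phase we derive a new recurrence and solve it multiplicatively from $g_k$ up to $g_K=\beta^+\theta$. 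The final bound is then the product of the phase-one residual factor $[1-(1-c)(1-\gamma)]^{1/(1-c)}$ and a phase-two decay factor. The claimed form $\bigl(\tfrac{\beta^++a}{1+a}\bigr)^{-a}$ with $a=(1-\gamma)/\gamma$ indicates what phase two must produce.

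\textbf{Key step (bounding $g(S^*\mid S_i)$ when $g(S_i)\ge g(S^*)$).} Apply the curvature definition with the roles swapped: take $T=S_i$ and $S=S^*$, which is admissible because $g(S_i)\ge g(S^*)$. This gives $g(S_i\mid S^*)\le g(S_i)/(1-\gamma)$. Since $g(S^*\mid S_i)=g(S^*\cup S_i)-g(S_i)=g(S_i\mid S^*)+g(S^*)-g(S_i)$, we get
\[
g(S^*\mid S_i)\le \frac{g_i}{1-\gamma}+g^*-g_i=g^*+\frac{\gamma}{1-\gamma}\,g_i .
\]
Insert this into the greedy ratio inequality from the proof of Theorem~\ref{thm:main-thm}. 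That inequality used only submodularity of $f$, supermodularity of $g$ and the greedy choice, so it remains valid. Using $f(S^*\mid S_i)\ge f^*-f_i$, we obtain
\[
f_{i+1}-f_i\ \ge\ (f^*-f_i)\,\frac{g_{i+1}-g_i}{g^*+\frac{\gamma}{1-\gamma}g_i}
=(f^*-f_i)\,\frac{a\,(g_{i+1}-g_i)}{a g^*+g_i}.
\]
In this phase the curvature of $f$ need not be used. Its effect enters only through the phase-one starting value.

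\textbf{Solving the recurrence.} Let $h_i=f^*-f_i$. The recurrence reads $h_{i+1}\le h_i\bigl(1-\tfrac{a\Delta g_{i+1}}{ag^*+g_i}\bigr)\le h_i\exp\!\bigl(-a\int_{g_i}^{g_{i+1}}\tfrac{dx}{ag^*+x}\bigr)$. Strictly, $\tfrac{\Delta g}{ag^*+g_i}\ge\ln\tfrac{ag^*+g_{i+1}}{ag^*+g_i}$ holds because $1/(ag^*+x)$ is decreasing, so the left endpoint dominates. Telescoping from $k$ to $K$ gives
\[
h_K\le h_k\Bigl(\frac{ag^*+g_K}{ag^*+g_k}\Bigr)^{-a}.
\]
Now bound this ratio from below using $g^*\le\theta$, $g_k\ge\theta$ and $g_K=\beta^+\theta$. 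The map $g^*\mapsto\tfrac{ag^*+g_K}{ag^*+g_k}$ is decreasing in $g^*$ since $g_K\ge g_k$, so the worst case is $g^*=\theta$. For the denominator we need $g_k\le\theta$, or more precisely we should start phase two exactly at the cost level $\theta$. This is where care is needed, and it is the main obstacle. The first overflow step $S_k$ may have $g_k$ as large as $\tfrac{2-\gamma}{1-\gamma}\theta$, and the factor $\tfrac{ag^*+\beta^+\theta}{ag^*+g_k}$ then loses the $1+a$ normalization.

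To handle this I would split the overflowing step. The phase-one recurrence holds while $g_i\le g^*$, and the phase-two recurrence holds while $g_i\ge g^*$. For the single step that straddles $g^*$, I would apply the two differential inequalities piecewise in the continuous relaxation (as in the appendix argument for Theorem~\ref{thm:curv-f}), with the breakpoint at $g^*$. Phase one is evaluated at cost $g^*$, and phase two integrates from $g^*$ to $g_K$, giving $\bigl(\tfrac{ag^*+g_K}{(1+a)g^*}\bigr)^{-a}$. Monotonicity in $g^*\le\theta$ then yields $\bigl(\tfrac{\beta^++a}{1+a}\bigr)^{-a}$. Phase one at $g^*$ (i.e.\ the normalized cost $G=1$) gives exactly the residual $[1-(1-c)(1-\gamma)]^{1/(1-c)}$. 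Multiplying the two factors gives $f(S_K)\ge\bigl(1-(\cdot)^{-a}[\cdot]^{1/(1-c)}\bigr)f^*$, which is the statement with $f^*$ normalized to $1$. I would verify that the straddling step's discrete inequality is dominated by the piecewise integral; this follows from the same convexity and left-endpoint arguments, since both integrands are decreasing in $g$.
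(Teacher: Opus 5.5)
Your overall structure matches the paper's. You use the same roles-swapped curvature bound $g(S^*\mid S_i)\le g^*+\frac{\gamma}{1-\gamma}g_i$, the same phase-two recurrence, telescoping from the first step past $g^*$, and the same final monotonicity in $g^*\le\theta$. Your $1-x\le e^{-x}$ plus left-endpoint integral is an equivalent substitute for Lemma~\ref{lem:over_recurrence_step}. The only real difference, and the one flawed point, is how you handle the step that straddles $g^*$.

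\textbf{The straddling step.} The reason you give is false. The phase-one rate $\frac{1-\gamma}{1-(1-c)(1-\gamma)G}$ is \emph{increasing} in $G$, so evaluating it at the left endpoint underestimates the integral. No left-endpoint argument turns that step into the continuous solution. In phase one the discrete-to-continuous passage works only through Bernoulli, $(1-x)^p\ge 1-px$ for $p\ge 1$ (Lemma~\ref{lem:interm_recurrence_step}).

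What actually closes the straddle is a rate comparison at the breakpoint. Set $p=\frac{1}{1-c}$, $q=\frac{1}{(1-c)(1-\gamma)}$, $a=\frac{1-\gamma}{\gamma}$. Then for $G\ge 1$,
\[
\frac{p}{q-G_{k-1}}\ \ge\ \frac{p}{q}\ =\ 1-\gamma\ =\ \frac{a}{a+1}\ \ge\ \frac{a}{a+G}.
\]
Write $x_0=G_{k-1}\le 1<x_1=G_k$ and $A=\bigl(\frac{q-1}{q-x_0}\bigr)^p\le 1$. Then
\begin{align*}
1-\frac{p(x_1-x_0)}{q-x_0} &\le A-\frac{p(x_1-1)}{q-x_0} \le A-(1-\gamma)(x_1-1)\\
&\le A\Bigl(1-\frac{a(x_1-1)}{a+1}\Bigr) \le A\Bigl(\frac{a+x_1}{a+1}\Bigr)^{-a}.
\end{align*}
The steps are justified as follows.
\begin{itemize}
\item First: Bernoulli on the $[x_0,1]$ portion.
\item Second: $q-x_0\le q$.
\item Third: $A\le 1$ together with $1-\gamma=\frac{a}{a+1}$.
\item Fourth: $(1+y)^{-a}\ge 1-ay$.
\end{itemize}
Combining this with $1-F_{k-1}\le\bigl(\frac{q-x_0}{q}\bigr)^p$ gives exactly your split bound:
\[
1-F_k\le\bigl(1-(1-c)(1-\gamma)\bigr)^{\frac{1}{1-c}}\Bigl(\frac{a+G_k}{a+1}\Bigr)^{-a}.
\]
With this inserted, your proof is correct.

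\textbf{How the paper avoids the split.} The paper lets the phase-one inequality run through step $k$, which is legitimate since $g_{k-1}\le g^*$. This gives $1-F_k\le(1-(1-c)(1-\gamma)G_k)^{1/(1-c)}$ with $G_k$ possibly well above $1$. It then telescopes phase two from $k$ to $K$. Finally it replaces $G_k$ by $1$, using that $(1-(1-c)(1-\gamma)x)^{1/(1-c)}(a+x)^{a}$ is decreasing for $x\ge 1$. The logarithmic derivative of that function is $-\frac{1-\gamma}{1-(1-c)(1-\gamma)x}+\frac{a}{a+x}\le 0$, which is the same rate comparison in derivative form.

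So both routes rest on the same fact: the phase-one rate from $G=1$ onward dominates the phase-two rate. The paper's route needs no new one-step inequality. Yours makes explicit why placing the breakpoint at $g^*$ costs nothing.
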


\begin{proof}[Proof Idea]
 We obtain a 
 new marginal cost bound for $i \geq k$:
 $$g(S^*|S_i) \leq \frac{\gamma}{1-\gamma}g_i + g^*.$$
 This allows us to derive a new recurrence relation for the objective gain $\Delta f_{i+1}$ relative to the cost increase $\Delta g_{i+1}$. By defining normalized objective and cost variables, $F_i$ and $G_i$, we transform the recurrence into a form where we can apply a generalized Bernoulli Inequality -- specifically the form $1 - p'x \leq (1+x)^{-p'}$ for $p' > 0$. This transformation converts the additive growth of the objective into a multiplicative bound. By telescoping this relation from the initial overflow step $k$ through to an arbitrary final step $K$, we derive a bound that depends on both the initial approximation at step $k$ and the overflow factor $\beta^+ = g_K / g^*$. 
 The detailed proof is available in Appendix.
\end{proof}

Theorem~\ref{thm:general} and Corollary~\ref{cor:before_overflow} combined provide the approximation factor for any step of Greedy (before or after the constraint) as shown in Figure~\ref{fig:teaser}.

\subsection{The Dual Problem} \label{Sec:Dual}

Using the Greedy algorithm guarantees, in conjunction with a binary search technique, we can obtain guarantees for the dual problem as well, where we wish to minimize the supermodular function $g$ subject to a lower bound constraint on the submodular function $f$.

\begin{algorithm}[h]
\caption{Dual algorithm via binary search}
\begin{algorithmic}[1]
\Statex \textbf{Input:} target value $\tau$, precision $\varepsilon$, primal algorithm $\mathcal{A}$
\State $L \leftarrow 0, R \leftarrow g(V)$
\While{$R - L > \varepsilon$}
    \State $B \leftarrow (L + R)/2$
    \State $S \leftarrow \mathcal{A}(B)$
    \If{$f(S) \geq \alpha \tau$}
        \State $R \leftarrow B, S_{best} \leftarrow S$
    \Else
        \State $L \leftarrow B$
    \EndIf
\EndWhile
\State \Return $S_{best}$
\end{algorithmic}
\end{algorithm}

\begin{restatable}{theorem}{thmDual}[Dual approximation]\label{thmDual}
Let $f:2^V\to\mathbb{R}_+$ be a nondecreasing submodular function and
$g:2^V\to\mathbb{R}_+$ a nondecreasing supermodular function.
Suppose we have an algorithm $\mathcal{A}(B)$ which, for any budget $B$, returns a set $S=\mathcal{A}(B)$ satisfying the primal $(\alpha,\beta)$-guarantee (namely,
$f(S) \ge \alpha\cdot\mathrm{OPT}_P(B), g(S) \le \beta B,$
)
where $\mathrm{OPT}_P(B)=\max_{T:\, g(T)\le B} f(T)$.
Let the dual optimum be
\[
B^* \;=\; \min_{T:\, f(T)\ge\tau} g(T).
\]
Then, by performing binary search on $B$ with additive precision $\varepsilon>0$, one can find a set $S$ such that
\[
  f(S) \ge \alpha\tau, \qquad g(S) \le \beta\Big(1+\frac{\varepsilon}{B^*}\Big)B^*,
\]
i.e., binary search with $\mathcal{A}$ yields an algorithmic $(\beta(1+\varepsilon/B^*),\alpha)$-approximation for the dual problem.
\end{restatable}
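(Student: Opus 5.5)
The plan is a monotonicity-plus-invariant argument for the binary search. First I record the key observation: for any budget $B \ge B^*$, the dual optimum $T^*$ (with $f(T^*)\ge\tau$ and $g(T^*)=B^*$) is feasible for the primal problem with budget $B$. Hence $\mathrm{OPT}_P(B)\ge f(T^*)\ge\tau$. The primal $(\alpha,\beta)$-guarantee then gives $f(\mathcal{A}(B))\ge\alpha\,\mathrm{OPT}_P(B)\ge\alpha\tau$. So the test in the binary search succeeds at every $B\ge B^*$. It may also succeed at some smaller $B$, which only helps us.

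Next I set up the loop invariant. At all times $L < B^*$ or the test failed at $L$, and whenever $R$ is updated, the stored $S_{best}=\mathcal{A}(R)$ satisfies $f(S_{best})\ge\alpha\tau$. More precisely:
\begin{itemize}
\item The right endpoint $R$ only moves to a budget where the test succeeded, or stays at its initial value $g(V)\ge B^*$.
\item The left endpoint $L$ only moves to a budget $B$ where the test failed. By the contrapositive of the first paragraph, such a $B$ satisfies $B<B^*$. Initially $L=0\le B^*$.
\end{itemize}
Thus $L\le B^*$ throughout. At termination $R-L\le\varepsilon$, so $R\le B^*+\varepsilon$.

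The returned set $S_{best}=\mathcal{A}(R)$ for the final successful $R$ then satisfies $f(S_{best})\ge\alpha\tau$ and
\[
g(S_{best})\le\beta R\le\beta(B^*+\varepsilon)=\beta\Bigl(1+\frac{\varepsilon}{B^*}\Bigr)B^*.
\]

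The main technical nuisance is an edge case: $R$ may never be updated, so $S_{best}$ is undefined. I would handle it by initializing $S_{best}\leftarrow\mathcal{A}(g(V))$. This is valid because at $B=g(V)\ge B^*$ the test succeeds by the first paragraph, and the bound $g\le\beta R$ still holds there. Alternatively, one can note that since $R$ starts at $g(V)\ge B^*+\varepsilon$ in nontrivial cases, some midpoint $\ge B^*$ is eventually tested.

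I also assume $\tau\le f(V)$, so that $B^*$ is well-defined, and $B^*>0$; when $B^*=0$ the ratio form of the bound is read as the additive bound $g(S_{best})\le\beta\varepsilon$. Finally, I would note that the number of calls to $\mathcal{A}$ is $O(\log(g(V)/\varepsilon))$, which makes the reduction efficient.
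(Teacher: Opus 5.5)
Your proposal is correct and follows the same argument as the paper: a failed test at budget $B$ forces $\mathrm{OPT}_P(B)<\tau$ and hence $B<B^*$, so the final successful budget is below $B^*+\varepsilon$, and the $\beta$ cost guarantee finishes the proof. Your loop-invariant version ($L\le B^*$ throughout, $R-L\le\varepsilon$ at exit) is slightly more faithful to the actual binary search than the paper's appeal to a failure at exactly $b-\varepsilon$, and initializing $S_{best}\leftarrow\mathcal{A}(g(V))$ fixes the undefined-output edge case that the paper leaves open.
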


\begin{proof}[Proof idea]
We approach the dual problem by reducing it to a series of primal problems solved via binary search. Given a target objective $\tau$ and an $(\alpha, \beta)$-approximation algorithm $\mathcal{A}$ for the primal problem, we search for the minimum budget $b$ such that $\mathcal{A}(b)$ satisfies $f(S) \ge \alpha\tau$. By the primal guarantee, if the algorithm fails to meet $\alpha\tau$ at budget $b-\varepsilon$, then the true optimal primal value at that budget is less than $\tau$; this implies that the optimal dual cost $B^*$ must exceed $b-\varepsilon$, effectively bounding the budget $b$ within an additive $\varepsilon$ of the dual optimum. Finally, applying the primal cost guarantee $\beta$ to the budget $b$ ensures that the resulting set $S_1$ satisfies the dual constraint $f(S_1) \ge \alpha\tau$ with a total cost $g(S_1)$ within a factor of $\beta(1 + \varepsilon/B^*)$ of the optimal dual cost $B^*$. The complete proof is available in the Appendix.
\end{proof}

\begin{figure*}[!htbp]
    \centering
    \begin{subfigure}[b]{0.33\textwidth}
        \centering
        \includegraphics[width=\linewidth]{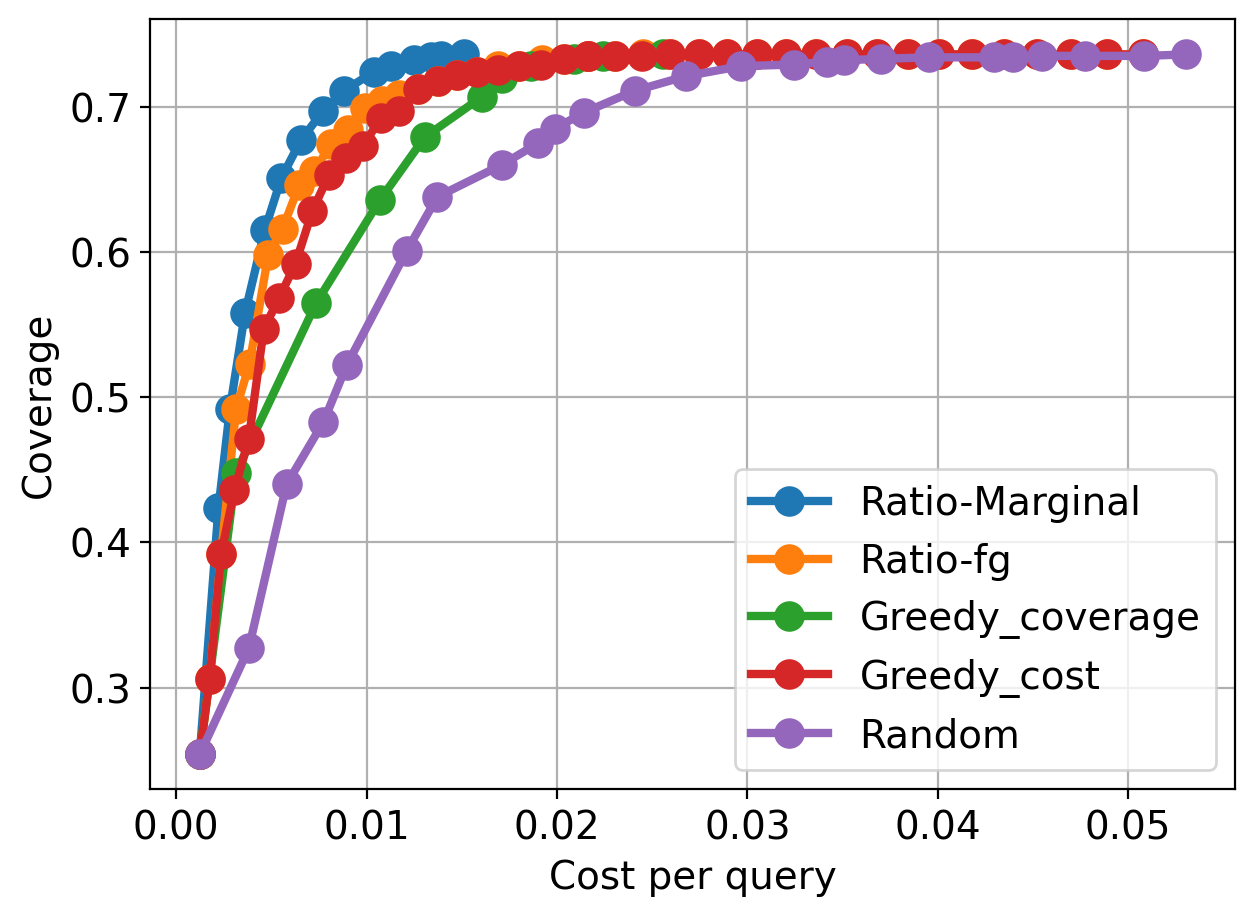}
        \caption{One round}
        \label{fig:sub1}
    \end{subfigure}
    \hfill
    \begin{subfigure}[b]{0.33\textwidth}
        \centering
        \includegraphics[width=\linewidth]{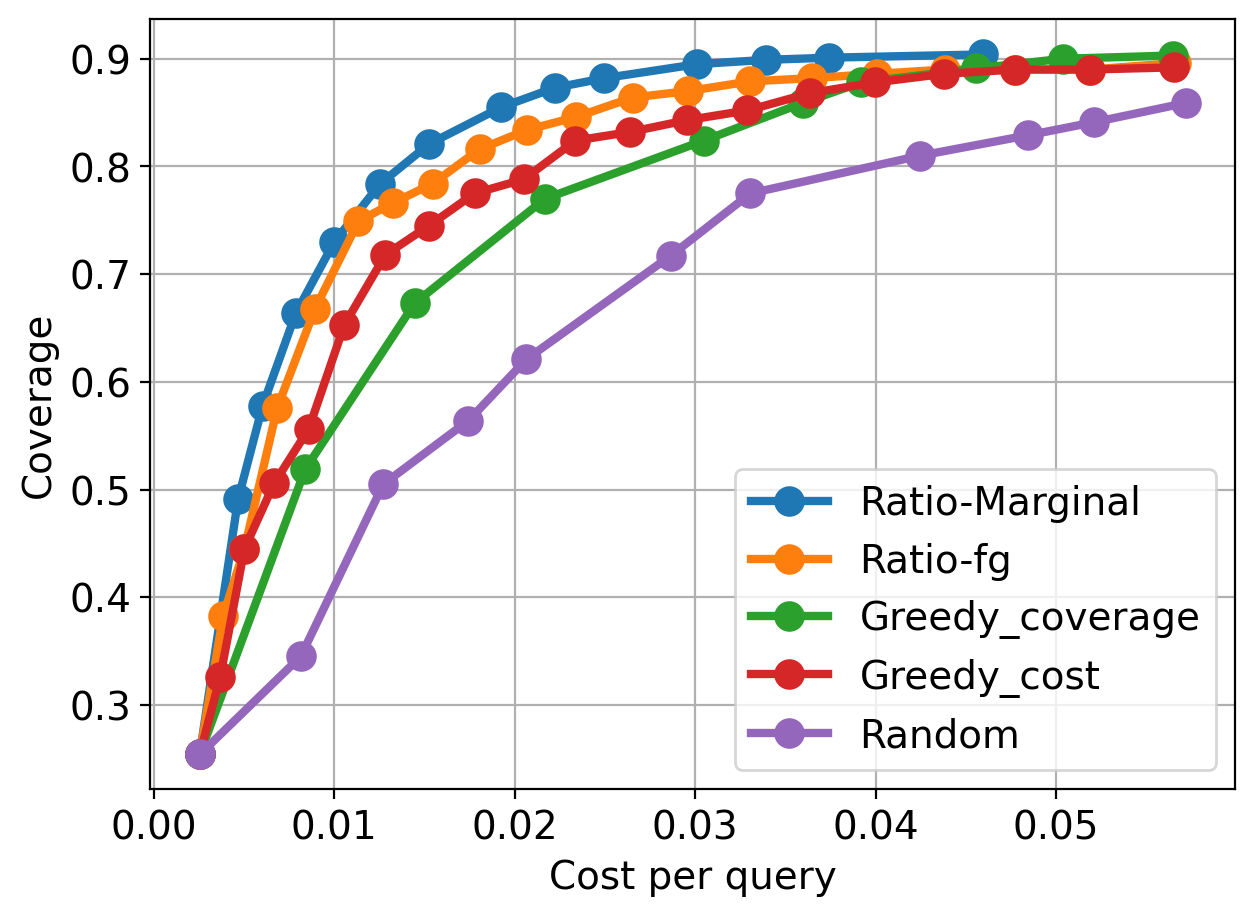}
        \caption{Two rounds}
        \label{fig:sub1}
    \end{subfigure}
    \hfill
    \begin{subfigure}[b]{0.33\textwidth}
        \centering
        \includegraphics[width=\linewidth]{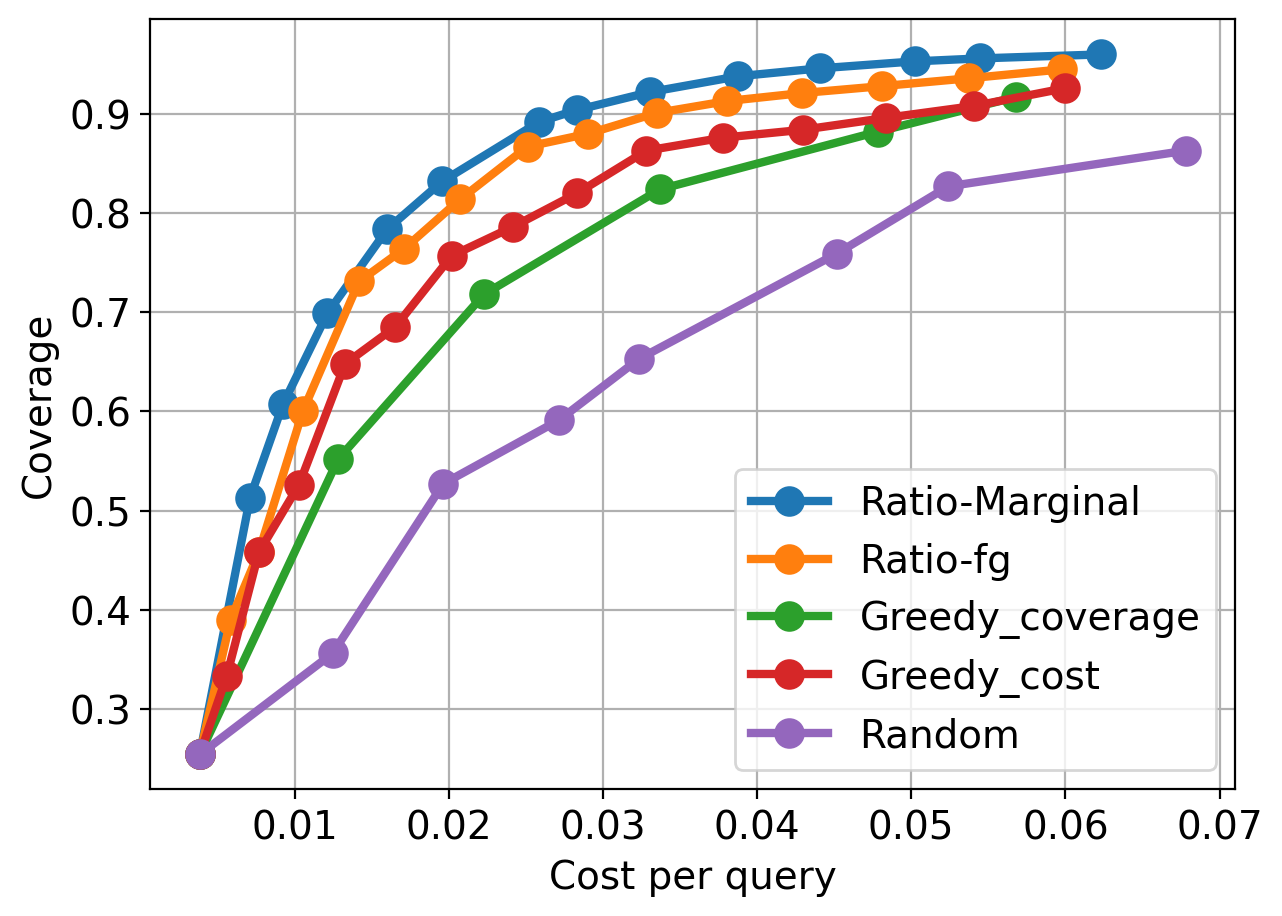}
        \caption{Three rounds}
        \label{fig:sub1}
    \end{subfigure}
    \caption{Comparison with 100 agents and 1000 questions with the ``Global View'' model.}
    \label{fig:global}
\end{figure*}

\begin{figure*}[!htbp]
    \centering
    \begin{subfigure}[b]{0.33\textwidth}
        \centering
        \includegraphics[width=\linewidth]{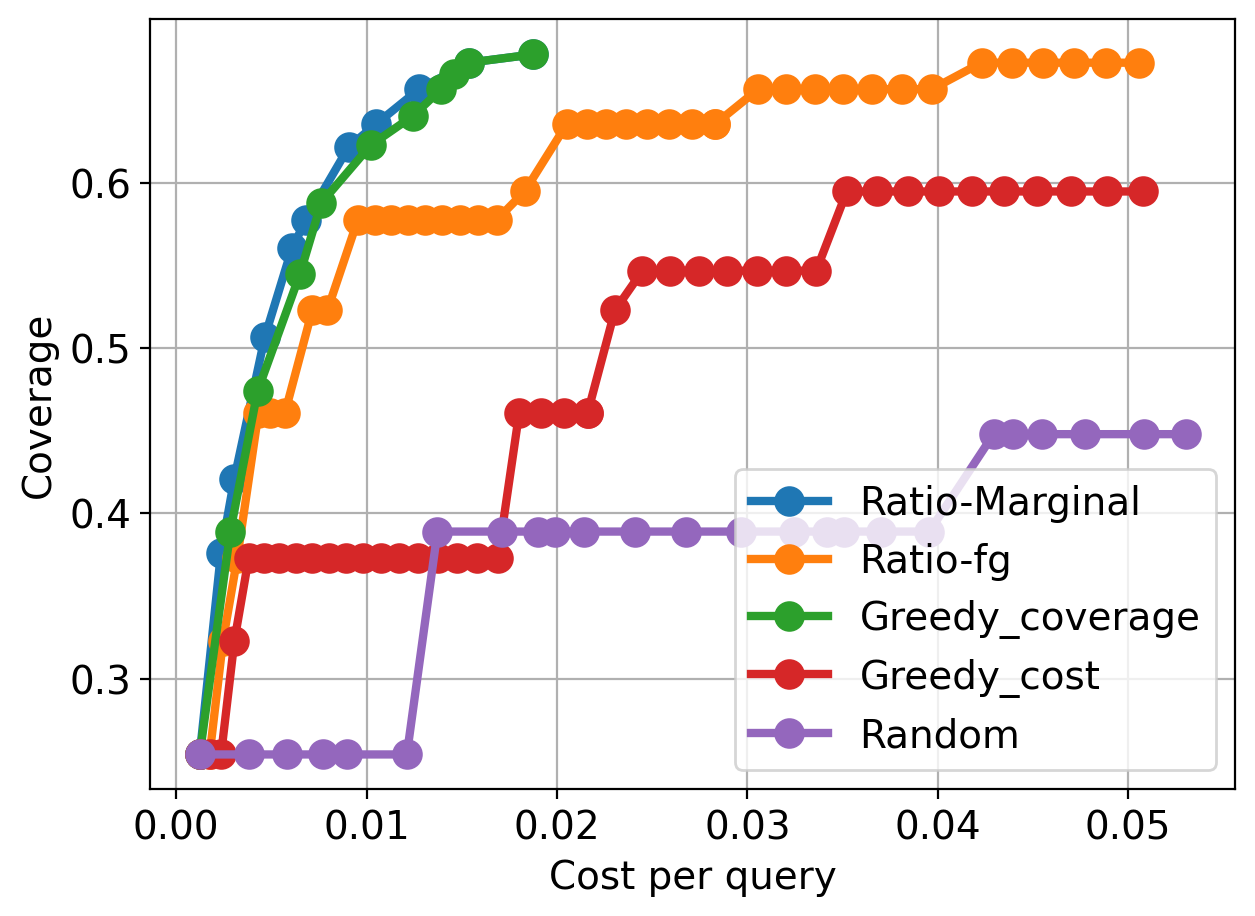}
        \caption{One round}
        \label{fig:sub1}
    \end{subfigure}
    \hfill
    \begin{subfigure}[b]{0.33\textwidth}
        \centering
        \includegraphics[width=\linewidth]{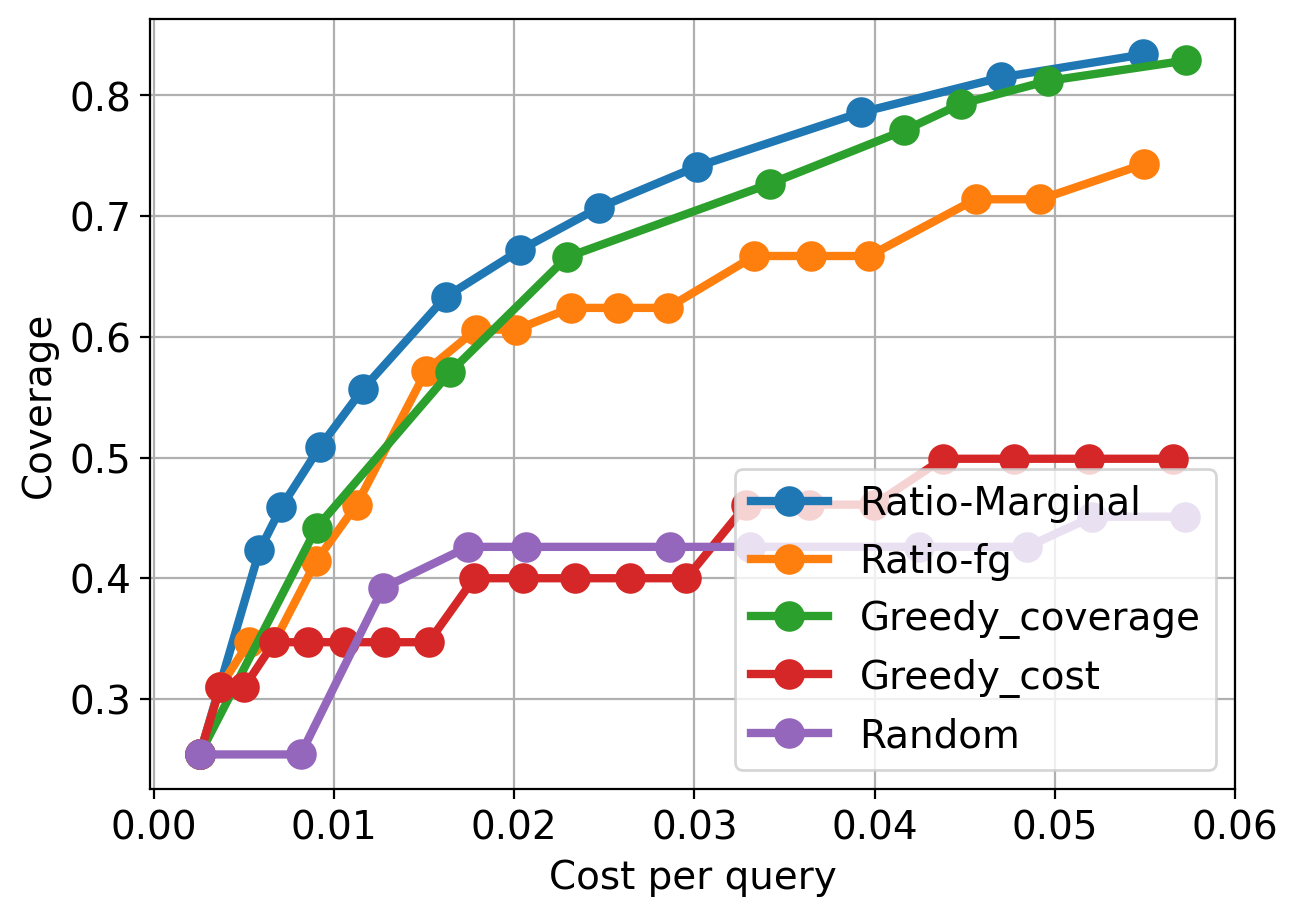}
        \caption{Two rounds}
        \label{fig:sub1}
    \end{subfigure}
    \hfill
    \begin{subfigure}[b]{0.33\textwidth}
        \centering
        \includegraphics[width=\linewidth]{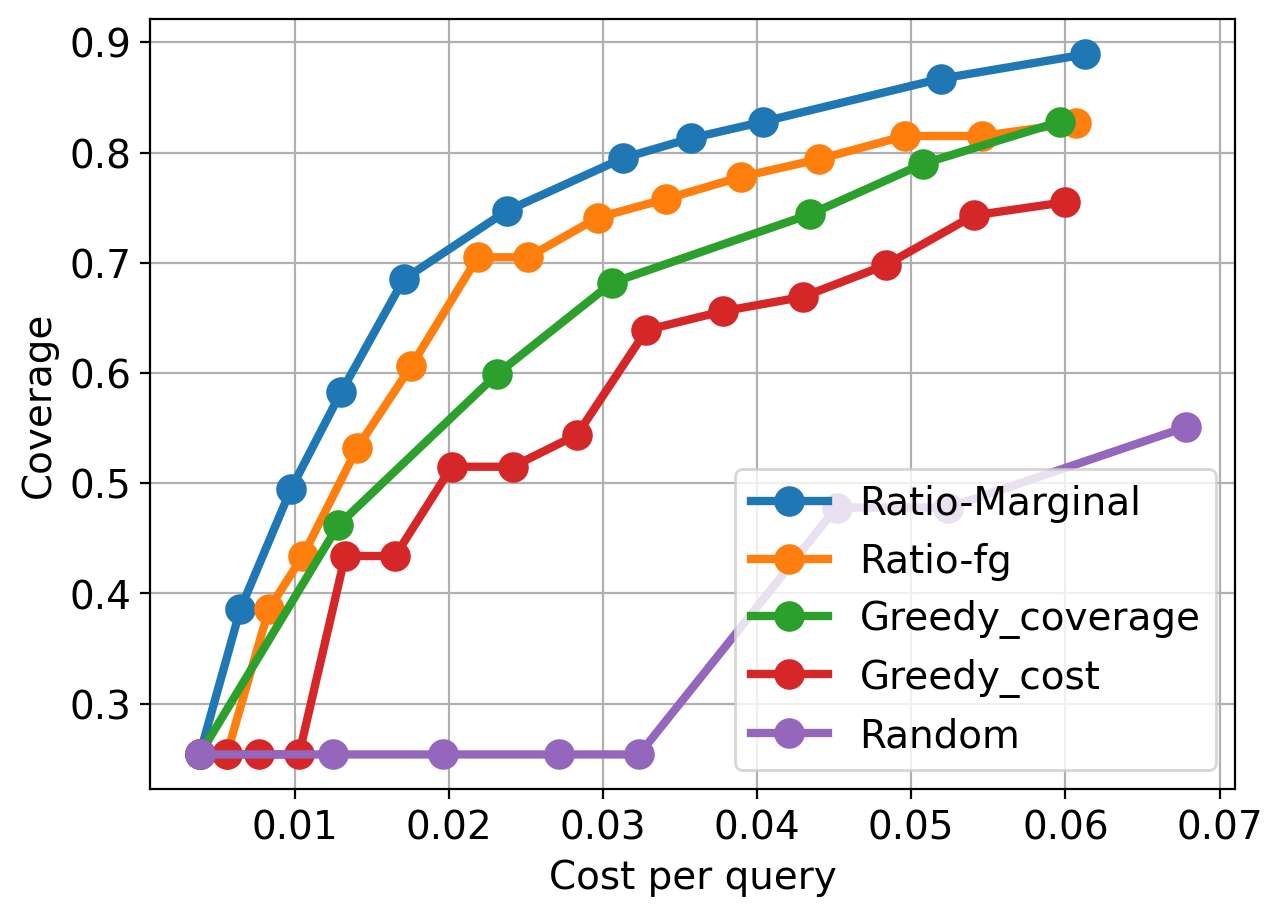}
        \caption{Three rounds}
        \label{fig:sub1}
    \end{subfigure}
    
    \caption{Comparison with 100 agents and 1000 questions with the ``Local View'' model.}
    \label{fig:local}
\end{figure*}

\section{Experiments}
To compare our Greedy algorithm with other greedy heuristics, we ran simulations of debating agents~\cite{du2023improving}. 

\subsection{Simulation Setup}

We implement a synthetic debate simulator where agents debate over multiple rounds to improve their responses~\cite{du2023improving}. Particularly, there is an agent-0 who would debate other agents hoping to get the right answer at the end of multiple rounds. At each round, agents consume initial inputs (system prompts and user prompts), and outputs of the previous round from all the models to generate outputs. Consider a benchmark of $T$ questions, using which we can evaluate the performance of agents before they are deployed. However, as more agents are selected the cost of the workflow increases due to increased inputs and outputs. The task is to pick a set of agents so that, eventually, agent-0 can correctly answer as many questions as possible in this benchmark while keeping the cost low. Below, we summarize the simulator mechanics and the parameters. All our code is publicly available~\footnote{\url{https://github.com/scc-usc/sub-sup-KDD2026}}

\paragraph{Agent generation.} We sample $m$ agents and $T$ binary-labeled questions. For each agent $i$, we draw a base accuracy $p_i$
and mark $\lfloor p_iT\rfloor$ randomly chosen questions as answered correctly by agent $i$ before the first round of debate. For each agent, we pick a random cost for input and output tokens. We make these costs partially correlated with the correctness to reflect a realistic scenario where better LLMs generally cost more. And similar to typical LLMs, we make output tokens more expensive than input tokens. 

\paragraph{Debate dynamics.} We implement two adoption models. In both models, at each round, agents have access to the responses in the previous round, based on which they may improve their correctness in the new responses. Each agent also has a parameter $q_i$ that determines their openness to change its answer when exposed to a correct answer.
The actions are influenced using the following model. \textbf{(1) All influence (Global View):} For each question, if any agent in the previous round was correct, then every previously-incorrect agent in $S$ adopts the correct label in the next round with probability $q_i$. This models a fast, fully-mixed influence regime where each agent has the capacity to pay attention to all other agents.
\textbf{(2) Peer influence (Local View):} Each agent samples a small set of peers every round. An agent becomes correct with a probability $q_i$ in round $r$ if at least one of its sampled peers was correct in round $r-1$. This models the influence regime, when agents only pay attention to outputs of some agents.


\paragraph{Objective is monotone submodular.} The objective of maximizing correct answers is monotone submodular. Consider one instance of the flow of correct answers across agents obtained by sampling interactions, and whether an agent will correct itself if exposed to a correct answer. Agent-0 eventually has the correct answer if there is a path from any agent who had the correct answer to Agent-0. The monotonicity follows from the fact that the flow of correct answers is ``progressive''. That is, if we add another agent, it can only increase the number of paths initiating from a correct answer. Additionally, when we add an agent, it introduces new paths for the correct answer to flow to Agent-0 in the final round. Since reachability in a graph is submodular~\cite{kempe2003maximizing}, so is the correctness of Agent-0 in this instance. Finally, the expected outcome over the probability of correctness is a linear combination of such instances. Therefore, the expected outcome is also monotone submodular.

\paragraph{Cost is supermodular with bounded curvature.} The cost is a combination of input tokens and output tokens (at intermediate rounds and the final round). Consider two disjoint sets of agents $S$ and $T$, and an agent $i$. When we include $i$ along with $S \cup T$, it introduces exchange of information with $|S| + |T|$ agents. On the other hand, adding it to $|S|$, it introduces new exchange of information with only $|S|$ agents. As a result, $g(\{i\} \cup S \cup T) - g(S \cup T) \geq g(\{i\} \cup S ) - g(S)$. Therefore, the cost is supermodular. Further, assume, without loss of generality, $g(S) \geq g(T)$. Also, to simplify the analysis, suppose each agent outputs the same amount of tokens $O$. Then the additional cost introduced by adding set $S$ to $T$ can be shown to be $g(S) + (r-1) O(C_S^{O} + C_T^{O}) + O(C_S^{I} + C_T^{I})$. That is, cost within $S$ alone across $r$ rounds, and the cost of exchanges between $S$ and $T$. Here, $C$ indicates sum of costs of tokens for agents in $S$ (or $T$) and superscript $I$ and $O$ denote input and output, respectively. The net cost of $S$, i.e., $g(S)$ can be shown to be $> O C_S^{O} + I C_S^{I}$. This fact along with $g(S) \geq g(T)$ leads to $1-\kappa   = \frac{g(S)}{g(S|T)} \geq \frac{1}{3} \implies \kappa \leq \frac{2}{3}.$
A special case that is easy to calculate is when all costs are also the same. Then the $g(S) \propto |S|^2 + b|S|$ for some constant $b>0$. Then, using $g(S) \geq g(T) \implies |S|\geq |T|$, and so,
\begin{align*}
    1-\kappa   &= \frac{g(S)}{g(S|T)} = \frac{|S|^2 + b|S|}{(|S|+|T|)^2 + b(|S|+|T|) - |T|^2 - b|T|}  \\
    & = \frac{|S|^2 + b|S|}{|S|^2 + 2|T||S| + b|S|} 
     \geq \frac{|S|^2 + b|S|}{|S|^2 + 2|S|^2 + b|S|} \\
    & \geq \frac{|S| + b}{3|S| + b} \geq \frac{1}{3}
    \implies \kappa \leq \frac{2}{3}.
\end{align*}



\begin{figure}[!htbp]
    \centering
       \begin{subfigure}[b]{0.23\textwidth}
        \centering
        \includegraphics[width=\linewidth]{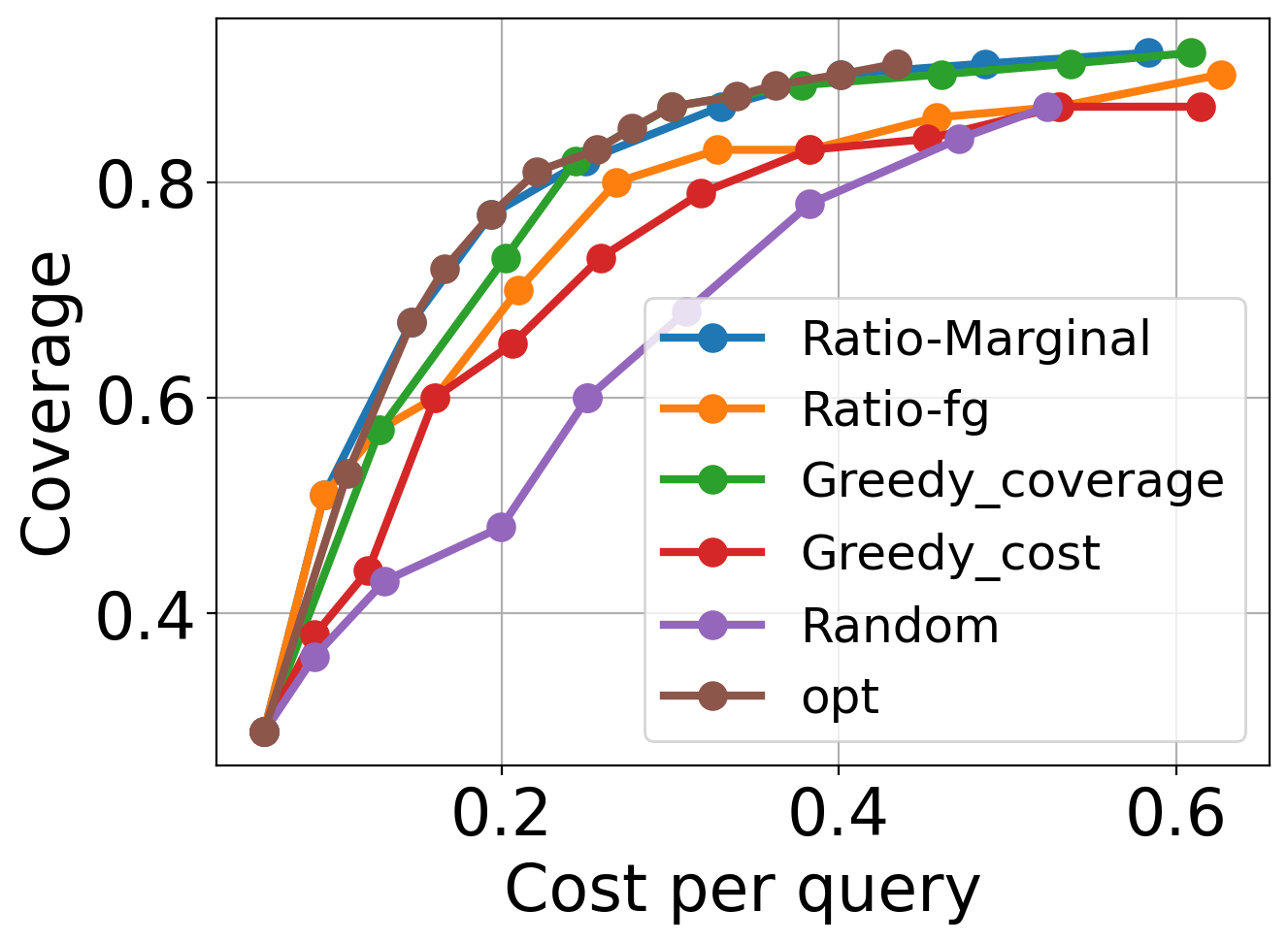}
        \caption{Global View, two rounds}
        \label{fig:sub1}
    \end{subfigure}
    \begin{subfigure}[b]{0.23\textwidth}
        \centering
        \includegraphics[width=\linewidth]{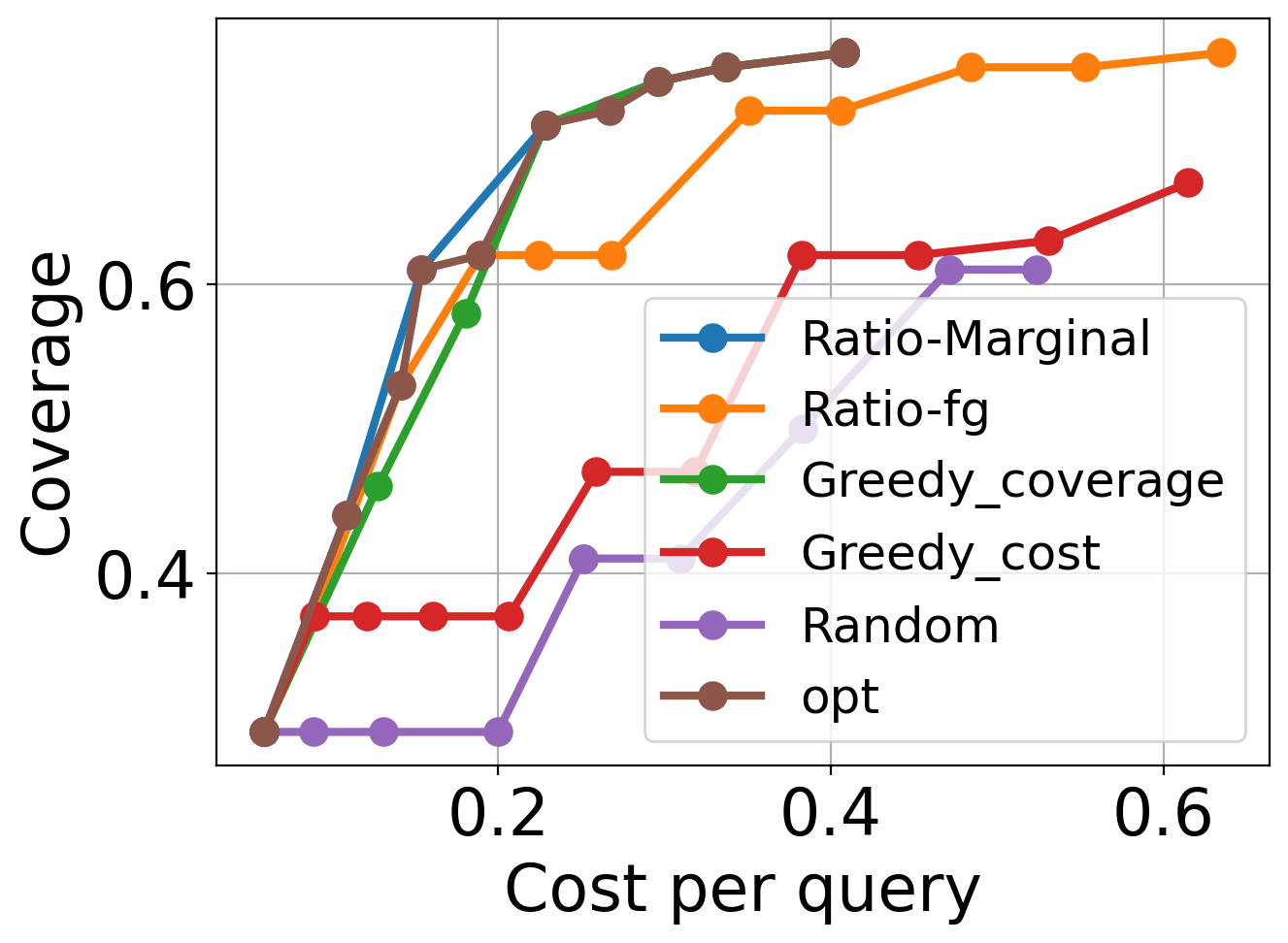}
        \caption{Local View, two rounds}
        \label{fig:sub1}
    \end{subfigure}
    
    \caption{Comparison including the optimal solution for 15 agents and 100 questions.}
    \label{fig:opt}
\end{figure}

\subsection{Algorithms Compared}

To empirically assess the performance of the proposed greedy approach with other heuristics, we compare the following methods. For each method, instead of solving the maximization problem at several cost cut-offs, we let the algorithms run to one fixed cost and plot $f$ vs $g$ for each iteration. 
\begin{itemize}
    \item Ratio-Marginal: The proposed greedy approach with approximation guarantees, which iteratively picks the highest ratio of marginal increases in $f$ and $g$.

    \item Ratio-fg: Another natural greedy choice, which iteratively picks based on the highest ratio of $f$ and $g$ at every step.

    \item Greedy-f: A greedy approach, which iteratively picks the highest marginal increases in $f$, ignoring $g$.
    
    \item Greedy-g: A greedy approach, which iteratively picks the smallest marginal increases in $g$, ignoring $f$.
    
    \item Random: Randomly picks an element to add to the set.

\end{itemize}

\paragraph{Implementation notes.} To make algorithm comparisons deterministic, we use identical pre-sampled interactions among layers.

\subsection{Results}
Figure~\ref{fig:global} shows the results of coverage vs cost for one of the instances over one, two, and three rounds for the Global View model. In all cases, Greedy-Ratio-Marginal outperforms all other heuristics. Ratio-fg is the closest competitor. On the other hand, for Local View model (Figure~\ref{fig:local}), the closest competitor is the Greedy-coverage heuristic for rounds one and two. Ratio-fg is significantly worse. We hypothesize that the other heuristics do not admit a good approximation ratio.
We also compare the greedy heuristics with exhaustive search (opt) for small problem sizes. We generate all possible subsets and evaluate their coverages and costs. For plotting, we set thresholds at small intervals for $g$ and pick the subset with the highest $f$. Figure~\ref{fig:opt} shows results for two rounds for the two debate models. We observe that Ratio-Marginal almost overlaps with the optimal solution\footnote{It may seem like sometimes Ratio-Marginal is better than opt, which is not true. These are actually scatter plots, as different algorithms lead to different costs. The line through the points only illustrates the trend.}. Greedy coverage comes close, while other heuristics are significantly worse. We ran several other instances, but the observations are consistent.

\paragraph{Other Application: Sensor Placement.} We ran another set of experiments for a sensor placement problem: $m$ sensors and $n$ targets to be covered. Each sensor has a different quality, resulting in different coverage and a different cost $c_i$.  The net cost of placing any set $S$ of $k$ out of $m$ sensors is $\sum_{c_i \in S} c_i +\lambda{k\choose 2}$, for some $\lambda$. The second term is the coordination cost across the sensors. It can be shown that this function is supermodular with curvature bounded by $2/3$. Figure~\ref{fig:sensor} shows the results. Similar to the debate simulation, for 15 sensors, we also calculate the optimal by exhaustive search. Results from Ratio-Marginal overlaps with the optimal. The advantage over Ratio-fg and Greedy\_coverage becomes more pronounced in the scenario with 100 sensors.

\begin{figure}[!htbp]
    \centering
    \begin{subfigure}[b]{0.23\textwidth}
        \centering
        \includegraphics[width=\linewidth]{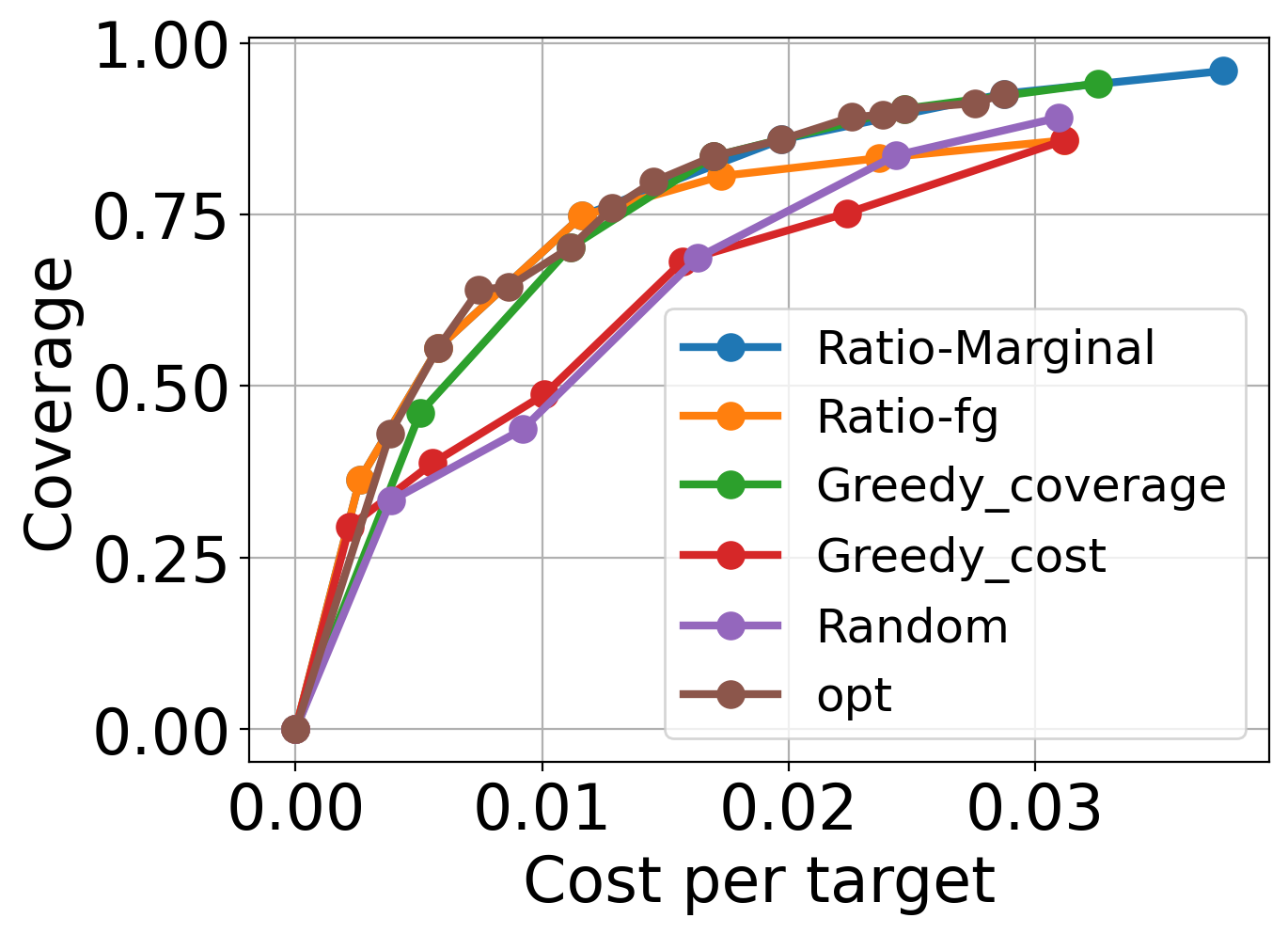}
        \caption{15 sensors}
        \label{fig:sub1}
    \end{subfigure}
    \hfill
    \begin{subfigure}[b]{0.23\textwidth}
        \centering
        \includegraphics[width=\linewidth]{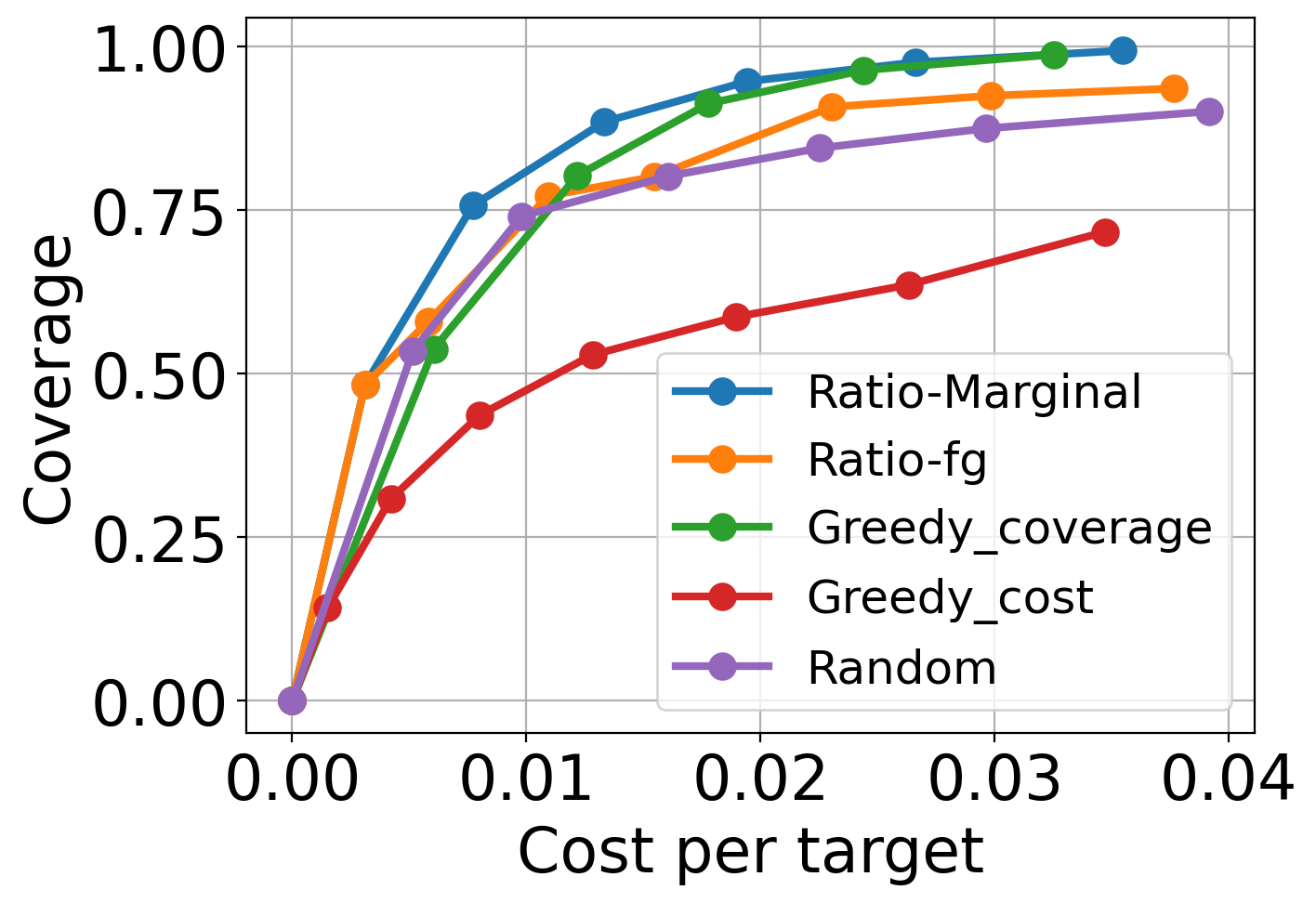}
        \caption{100 sensors}
        \label{fig:sub1}
    \end{subfigure}  
    \caption{Coverage vs cost for sensor placement.}
    \label{fig:sensor}
\end{figure}

\section{Conclusions and Remarks}

We study the performance of a greedy algorithm for maximizing a monotone submodular function subject to a monotone supermodular upper-bound constraint. We show that the problem is inapproximable in polynomial time in general, and therefore focus on the setting in which the supermodular cost function has bounded curvature. 
The proposed algorithm incrementally selects elements that maximize the ratio between the marginal gain in the objective and the marginal increase in the constraint. Despite its simplicity, we show that this greedy rule achieves constant-factor bicriteria approximation guarantees, with performance depending explicitly on the curvature parameters and the value of the constraint function at each step of the greedy process.
We also construct an instance to demonstrate the tightness of our performance characterization.

As a consequence of the tightness of our analysis, our results also recover and extend several classical special cases. For example, in the knapsack problem, both the objective and the cost functions are modular and therefore have zero curvature. Taking the limit as \(c\) and \(\gamma\) approach zero in Theorem~\ref{thm:curv-f}, we obtain an exact approximation ratio of \(1\), with a worst-case constraint violation by a factor of \(2\). When the objective function is submodular and the constraint is modular, Corollary~\ref{cor:overflow-traditional} yields a \((1 - e^{-\beta}, \beta)\) bicriteria approximation, matching known results~\cite{feldman2025bicriteria} by setting \(\epsilon = e^{-\beta}\).

Finally, as shown in Section \ref{sec:sub-sup}, our new and less restrictive notion of supermodular curvature applies to a broader range of settings.
In particular, a quadratic cost function has curvature \(1/3\), while a cost function proportional to the \(p\)-th power of the sum of individual costs has curvature \(\frac{2^p - 2}{2^p - 1}\). These examples illustrate that many natural supermodular cost functions admit bounded curvature, reinforcing the practical relevance of our framework.

\begin{acks}
    This work was supported in part by the National Science Foundation grant CCF-2308744.
\end{acks}


\clearpage
\bibliographystyle{ACM-Reference-Format}
\balance
\bibliography{refs}



\appendix
\input{short-appendix.tex}

\end{document}

%% file: short-appendix.tex
\section{Complete Proofs}

\subsection{With Curvature of $f$}

To analyze the performance of Greedy, for the case when $f$ has a curvature $c$, we will first derive a recurrence relation below.

\begin{lemma}\label{lem:curv_recur}
If $S_i$ is the set returned by greedy after $i$ steps (before $g(S_{i}$ overflows the constraint).
Suppose $f_i = f(S_i)$, $g_i = g(S_i)$ and $\Delta f_{i+1}$ and $\Delta g_{i+1}$ are the marginal gains in objective and the cost, respectively, obtained in step $i+1$, then
\[
(g^* - (1-c)(1-\gamma) g_i) \Delta f_{i+1} \geq (f^* - f_i)(1-\gamma) \Delta g_{i+1}.
\]
\end{lemma}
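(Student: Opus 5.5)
The plan is to reuse the ratio argument from the proof of Theorem~\ref{thm:main-thm}, but to split the greedy set into $S_i\cap S^*$ and $S_i\setminus S^*$. This lets submodular curvature credit the greedy elements outside $S^*$ and lets supermodularity discount the greedy elements inside $S^*$. Write $\rho_{i+1}=\Delta f_{i+1}/\Delta g_{i+1}$, $\bar f_i=f(S_i\setminus S^*)$, $\bar g_i=g(S_i\setminus S^*)$ and $\hat g_i=g(S_i\cap S^*)$. Exactly as in the proof of Theorem~\ref{thm:main-thm} (mediant inequality, submodularity of $f$, supermodularity of $g$), the greedy choice gives
\[
\rho_{i+1}\, g(S^*\mid S_i)\;\ge\; f(S^*\mid S_i).
\]

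\textbf{Step 1: lower bound on $f(S^*\mid S_i)$.} We have $f(S^*\mid S_i)=f(S^*\cup S_i)-f_i$. Adding the elements of $S_i\setminus S^*$ one at a time to $S^*$, the curvature definition \eqref{Eqn:SubmodularCurvature} gives each marginal at least $(1-c)$ times its singleton value. Summing the singleton values and using submodularity then gives $f(S^*\cup S_i)\ge f^*+(1-c)\bar f_i$. Hence
\[
f(S^*\mid S_i)\ge f^*-f_i+(1-c)\bar f_i .
\]

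\textbf{Step 2: $\bar f_i\ge \rho_{i+1}\bar g_i$.} This uses that greedy ratios are nonincreasing along the run. At step $j\le i$ the chosen ratio $\rho_j$ dominates the ratio of any element $v\in S_i\setminus S^*$ not yet picked. Marginals of $f$ only shrink and marginals of $g$ only grow, so the ratio available to a later step is at most the earlier one, giving $\rho_j\ge\rho_{i+1}$. Then $f(S_i\setminus S^*)\ge \sum_{v}f(v\mid\text{prefix})$, taken over the greedy steps that added elements $v$ of $S_i\setminus S^*$. Each of these terms is at least $\rho_{i+1}\,g(v\mid\text{prefix})$, and by supermodularity their sum is at least $\rho_{i+1}\bar g_i$. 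I would check this carefully, since the comparison of $f(S_i\setminus S^*)$ with marginals taken relative to greedy prefixes needs the correct direction of submodularity.

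\textbf{Step 3: combine and bound the cost term.} Substituting Steps 1 and 2 gives
\[
\rho_{i+1}\bigl(g(S^*\mid S_i)-(1-c)\bar g_i\bigr)\ge f^*-f_i .
\]
It then suffices to show
\[
(1-\gamma)\,g(S^*\mid S_i)\le g^*-(1-c)(1-\gamma)(g_i-\bar g_i).
\]
Supermodularity with disjoint parts gives $g_i\ge \bar g_i+\hat g_i$ and $g^*\ge g(S^*\setminus S_i)+\hat g_i$. Writing $g(S^*\mid S_i)=g(S^*\setminus S_i\mid S_i)$ and applying curvature with $T=S^*\setminus S_i$, $S=S_i$ yields $g(S^*\mid S_i)\le (g^*-\hat g_i)/(1-\gamma)$.

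One subtlety is that $g_i-\bar g_i$ can exceed $\hat g_i$, so $g^*-\hat g_i\le g^*-(1-c)(1-\gamma)\hat g_i$ does not by itself give the displayed target. I would therefore first try to carry $\hat g_i$ through Steps 2--3 directly, that is, apply Step 2 to all of $S_i$ while keeping the $S^*$-part separate. Failing that, I would bound $g_i-\bar g_i$ using the curvature inequality with $T=S_i\cap S^*$.

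\textbf{Main obstacle.} The hardest step is the curvature application in Step 3. The definition \eqref{Eqn:SupermodularCurvature} only permits $g(T\mid S)\le g(T)/(1-\gamma)$ when $g(T)\ge g(S)$. The hypothesis that $S_i$ has not overflowed gives $g_i\le\theta$, which is comparable to $g^*$, but not necessarily $g(S^*\setminus S_i)\ge g_i$. My first attempt is to apply curvature to the full $S^*$ as in \eqref{eqn:using_curvature}, where $g^*\ge g_i$ holds up to the early-termination case already handled in \eqref{eqn:early_terminate}. I would then subtract the $\hat g_i$ contribution separately via $g(S^*\cup S_i)\le g(S^*\mid S_i)+g_i$ and the supermodular split of $g^*$. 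If that bookkeeping fails, I would fall back on the case split used for Theorem~\ref{thmBoundary}.
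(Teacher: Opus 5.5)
Your Steps 1 and 2, and the reduction in Step 3 to the cost inequality
\[
(1-\gamma)\,g(S^*\mid S_i)\le g^*-(1-c)(1-\gamma)(g_i-\bar g_i),
\]
are correct and coincide with the paper's argument. The gap is that you never prove this inequality, and none of the routes you sketch delivers it.

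\begin{itemize}
\item Curvature with $T=S^*\setminus S_i$, $S=S_i$ requires $g(S^*\setminus S_i)\ge g_i$, which nothing guarantees. Even granting it, you only get $g^*-\hat g_i$. As you note, this is too weak, because $\hat g_i\le g_i-\bar g_i$ can be smaller than $(1-c)(1-\gamma)(g_i-\bar g_i)$.
\item Your ``first attempt'' applies curvature to the pair $(S^*,S_i)$ as in (\ref{eqn:using_curvature}). That only gives $g(S^*\mid S_i)\le g^*/(1-\gamma)$, which discards the whole term $(1-c)(g_i-\bar g_i)$. The relation $g(S^*\cup S_i)=g(S^*\mid S_i)+g_i$ is an identity and cannot restore it.
\item Bounding $g_i-\bar g_i$ by curvature with $T=S_i\cap S^*$ would need $\hat g_i\ge \bar g_i$. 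That is also unavailable: this paper's curvature only controls $g(T\mid S)$ when $g(T)\ge g(S)$.
\item The case split from Theorem~\ref{thmBoundary} only settles whether $g_i\le g^*$. It does nothing about the missing term.
\end{itemize}

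The missing idea is to use $S_i\setminus S^*$ as the base set instead of $S_i$. Since $S^*\cup(S_i\setminus S^*)=S^*\cup S_i$, you have the exact identity
\[
g(S^*\mid S_i)=g(S^*\mid S_i\setminus S^*)-(g_i-\bar g_i).
\]
Curvature with $T=S^*$, $S=S_i\setminus S^*$ is legitimate here: $g^*\ge g_i\ge \bar g_i$ by monotonicity, in the regime $g_i\le g^*$ you already identified. This gives
\[
(1-\gamma)\,g(S^*\mid S_i)\le g^*-(1-\gamma)(g_i-\bar g_i).
\]
Since $g_i-\bar g_i\ge 0$ and $1-c\le 1$, your target follows at once. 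This is exactly how the paper bounds the denominator: it obtains $g^*/(1-\gamma)+c\bar g_i-g_i\le g^*/(1-\gamma)-(1-c)g_i$. The quantity $\hat g_i$ and the disjoint supermodular split of $g^*$ play no role.
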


\begin{proof}


We define the following components of $S_i$:
\begin{itemize}
    \item $\hat{g}_i = g(S_i \cap S^*)$ (cost of elements in $S_i$ that are also in $S^*$)
    \item $\bar{g}_i = g(S_i \setminus S^*)$ (cost of elements in $S_i$ not in $S^*$)
    \item $\bar{f}_i = f(S_i \setminus S^*)$ (value of elements in $S_i$ not in $S^*$)
\end{itemize}

\noindent\textbf{Bounding $\bar{f}_i$ using Marginal Gains:}
Let $v_j$ be the element added at iteration $j$. Let $\delta_j = 1$ if $v_j \in S_i \setminus S^*$, and $\delta_j = 0$ otherwise. We consider the marginal gain of $\bar{f}_i$ at step $j$: $\Delta \bar{f}_j = \bar{f}_{j} - \bar{f}_{j-1}$.

By submodularity, for $j \leq i$:
\begin{equation}\label{eqn:fjbar2fj}
\Delta \bar{f}_j = f(v_j \mid S_{j-1} \setminus S^*) \geq \delta_j f(v_j \mid S_{j-1}) = \delta_j \Delta f_j
\end{equation}
If any $v_j \in S_i \setminus S^*$ was the best choice at step $j \leq i$, we have:
\begin{equation}\label{eqn:fj2gj}
\frac{\Delta f_j}{\Delta g_j} \geq \frac{f(v_{i+1}|S_j-1)}{g(v_{i+1}|S_{j-1})} \geq  \frac{\Delta f_{i+1}}{\Delta g_{i+1}} \implies \Delta f_j \geq \frac{\Delta f_{i+1}}{\Delta g_{i+1}} \Delta g_j\,,
\end{equation}
where the first inequality follows from greedy choice and the second follows from submodularity of the numerator and supermodularity of the denominator. Using Inequalities~\ref{eqn:fjbar2fj} and~\ref{eqn:fj2gj}
\begin{equation}
\Delta \bar{f}_j \geq \delta_j \frac{\Delta f_{i+1}}{\Delta g_{i+1}} \Delta g_j\,.
\end{equation}
Also, whenever we add an element that was not in $S^*$, $\delta_j = 1$, and due to supermodularity $\delta_j \Delta g_j \geq \Delta \bar{g}_j$. Therefore,
\begin{equation}
    \Delta \bar{f}_j \geq \frac{\Delta f_{i+1}}{\Delta g_{i+1}} \Delta \bar{g}_j\,.
\end{equation}

Summing over $j=1$ to $i$:

\begin{equation}
\bar{f}_i \geq \frac{\Delta f_{i+1}}{\Delta g_{i+1}} \bar{g}_i
\end{equation}

Recall that the greedy condition leads to
\begin{equation}\label{eqn:greedy_inequality}
\frac{\Delta f_{i+1}}{\Delta g_{i+1}} \geq \frac{f(S^* \mid S_i)}{g(S^* \mid S_i)}.    
\end{equation}

\noindent\textbf{Bounding the Numerator $f(S^* \mid S_i)$:}
\begin{equation}
f(S^* \cup S_i) = f(S^* \cup S_i \setminus S^*) = f^* + f(S_i \setminus S^* | S^*) \geq f^* + (1-c)\bar{f}_i\,,
\end{equation}
where the inequality follows from the submodular curvature $c$. Using this,
\begin{equation}
f(S^* \mid S_i)  = f(S^* \cup S_i) - f_i \geq f^* - f_i + (1-c)\bar{f}_i\,.
\end{equation}
Substituting the derived lower bound for $\bar{f}_i$:
\begin{equation}
f(S^* \mid S_i) \geq f^* - f_i + (1-c)\frac{\Delta f_{i+1}}{\Delta g_{i+1}} \bar{g}_i
\end{equation}

\noindent\textbf{Bounding the Denominator $g(S^* \mid S_i)$:}
Using the supermodular curvature $\gamma$ (noting that $g(S^*) \geq g(S_i) \geq g(S_i\setminus S^*)$):
\begin{equation}\label{eqn:g_marginal_bound}
g(S^* \mid S_i) = g(S^* \mid S_i \setminus S^*) + g(S_i \setminus S^*) - g_i  \leq \frac{g^*}{1-\gamma} + \bar{g}_i - g_i\,.
\end{equation}
\noindent\textbf{Substitution: }
Substituting the bounds into Inequality~\ref{eqn:greedy_inequality}:
\begin{equation}
\frac{\Delta f_{i+1}}{\Delta g_{i+1}} \geq \frac{f^* - f_i + (1-c)\frac{\Delta f_{i+1}}{\Delta g_{i+1}} \bar{g}_i}{\frac{g^*}{1-\gamma} + \bar{g}_i - g_i}
\end{equation}
Rearranging terms by isolating $\Delta f_{i+1}$:
\begin{equation}\label{eqn:recurrence_interm}
\Delta f_{i+1} \left[ \frac{g^*}{1-\gamma} + \bar{g}_i - g_i - (1-c) \bar{g}_i \right] \geq (f^* - f_i) \Delta g_{i+1}
\end{equation}
Simplifying the term in brackets:
\begin{align}
 \frac{g^*}{1-\gamma} + \bar{g}_i - g_i - (1-c) \bar{g}_i 
 &= \frac{g^*}{1-\gamma} + c\bar{g}_i - g_i  \nonumber\\ 
 &\leq  \frac{g^*}{1-\gamma} - (1-c)g_i \label{eqn:g_simplify_interm} \\
 &= \frac{g^* - (1-c)(1-\gamma)g_i}{1-\gamma}. \label{eqn:g_simplify}
\end{align}
where Inequality \label{eqn:g_simplify_interm} follows from $\bar{g_i} \leq g_i$.

\noindent\textbf{Final Recurrence Relation:}
Combining Inequalities~\ref{eqn:g_simplify} and~\ref{eqn:recurrence_interm}
\begin{equation}
\Delta f_{i+1} \left[ \frac{g^* - (1-c)(1-\gamma) g_i}{1 - \gamma} \right] \geq (f^* - f_i) \Delta g_{i+1}
\end{equation}
Rearranging:
\begin{equation}
(g^* - (1-c)(1-\gamma) g_i) \Delta f_{i+1} \geq (f^* - f_i)(1-\gamma) \Delta g_{i+1}
\end{equation}
\end{proof}

\input{extra-appendix}

\subsection{The Dual Problem}

\thmDual*

\begin{proof}
Let $B^*$ be the optimal dual value, i.e.,
\[
B^* = \min_{T:\, f(T)\ge\tau} g(T).
\]
Run the primal approximation algorithm $\mathcal{A}(B)$ at two budgets $b$ and $b-\varepsilon$ to obtain
\[
S_1 = \mathcal{A}(b), \qquad S_2 = \mathcal{A}(b-\varepsilon),
\]
such that
\[
f(S_1) \ge \alpha\tau, \qquad f(S_2) < \alpha\tau.
\]
From the primal guarantee for $S_2$, we have
\[
f(S_2) \ge \alpha\cdot \mathrm{OPT}_P(b-\varepsilon),
\]
which implies $\mathrm{OPT}_P(b-\varepsilon) < \tau$.
Hence, the budget $b-\varepsilon$ is insufficient to achieve $f(T)\ge\tau$, meaning
\[
b-\varepsilon < B^*.
\]
Therefore,
\[
b < B^* + \varepsilon.
\]

Using the primal cost guarantee at budget $b$, we have
\[
g(S_1) \le \beta b < \beta (B^* + \varepsilon)
= \beta\Big(1+\frac{\varepsilon}{B^*}\Big)B^*.
\]
Thus, the set $S_1$ satisfies
\[
f(S_1)\ge\alpha\tau, \qquad g(S_1)\le \beta\Big(1+\frac{\varepsilon}{B^*}\Big)B^*,
\]
which gives a $(\beta(1+\varepsilon/B^*),\alpha)$-approximation for the dual problem.
\end{proof}

%% file: extra-appendix.tex
To solve this recurrence relation, we first state and proof a Lemma that will help us transform one step of the recurrence.

\begin{lemma}\label{lem:interm_recurrence_step}
Let $q \geq  p \geq 1$. Suppose $a_i < q$, and $a_{i+1} \geq a_i$ follows
$p \frac{a_{i+1} - a_i}{q - a_i} < 1$. Then, 
\[
1 - p \frac{a_{i+1} - a_i}{q - a_i} \leq \left( \frac{q - a_{i+1}}{q - a_i} \right)^p
\]
\end{lemma}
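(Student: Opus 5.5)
Set $x = \frac{a_{i+1}-a_i}{q-a_i}$. Since $a_{i+1}\ge a_i$ and $a_i<q$, we have $x\ge 0$, and by hypothesis $px<1$. The right-hand side is
\[
\left(\frac{q-a_{i+1}}{q-a_i}\right)^p = (1-x)^p ,
\]
where $1-x>1-1/p\ge 0$ because $p\ge 1$. So the base is strictly positive and the real power is well defined. The claim therefore reduces to the one-variable inequality
\[
1-px \le (1-x)^p \qquad\text{for } x\in[0,1/p),\ p\ge 1 .
\]
This is Bernoulli's inequality for real exponents $p\ge 1$ and base $1+y$ with $y=-x\ge -1$.

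I will prove it directly rather than cite it. Define $h(x)=(1-x)^p-1+px$ on $[0,1)$. Then $h(0)=0$ and
\[
h'(x)=p\bigl(1-(1-x)^{p-1}\bigr).
\]
For $x\in[0,1)$ and $p\ge 1$ we have $0<1-x\le 1$, hence $(1-x)^{p-1}\le 1$ and $h'(x)\ge 0$. So $h$ is nondecreasing, and $h(x)\ge h(0)=0$. This is exactly the required inequality.

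An equivalent route is convexity. The map $x\mapsto(1-x)^p$ is convex on $[0,1)$ for $p\ge 1$, so it lies above its tangent line at $x=0$, which is $1-px$.

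The condition $q\ge p$ is not needed for this step. It presumably matters only later, when the lemma is applied to telescope the recurrence of Lemma~\ref{lem:curv_recur}. The only delicate point is making sure the base $1-x$ is nonnegative so that the real power makes sense. This is guaranteed because $a_{i+1}<q$, which follows from $px<1$ together with $p\ge 1$. With that checked, the argument is routine calculus.
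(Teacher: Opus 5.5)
Your proof is correct and follows the paper's argument: you use the same substitution $x=\frac{a_{i+1}-a_i}{q-a_i}$, the same check that $0\le x<1$ (the paper phrases it as $a_{i+1}\le q$, obtained from $px<1$ and $p\ge 1$), and the same appeal to Bernoulli's inequality $(1-x)^p\ge 1-px$. What you add is a self-contained derivative proof of Bernoulli's inequality, which the paper simply cites, and the correct observation that the hypothesis $q\ge p$ is not used in this step.
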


\begin{proof}
First, we note that we must have $a_{i+1} \leq q$. If not then,

\[
p \frac{a_{i+1} - a_i}{q - a_i} \geq p \frac{q - a_i}{q - a_i} = 1,
\]
which violates the given condition. 
Now, let $x = 1 - \frac{q - a_{i+1}}{q - a_i}$. Note that:
\[
x = \frac{(q - a_i) - (q - a_{i+1})}{q - a_i} = \frac{a_{i+1} - a_i}{q - a_i}
\]
Since $q \geq a_{i+1} \geq a_i$, it follows that $0 \leq x \leq 1$. Applying Bernoulli's Inequality  which states that $(1 - x)^p \geq 1 - px$ for $x \in [0, 1]$ and $p \geq 1$, we have:

\begin{align*}
1 - p \left( \frac{a_{i+1} - a_i}{q - a_i} \right) &= 1 - p \left( \frac{(q - a_i) - (q - a_{i+1})}{q - a_i} \right) \\
&= 1 - p \left( 1 - \frac{q - a_{i+1}}{q - a_i} \right) \\
&\leq \left( 1 - \left( 1 - \frac{q - a_{i+1}}{q - a_i} \right) \right)^p = \left( \frac{q - a_{i+1}}{q - a_i} \right)^p
\end{align*}
\end{proof}

Now, we are ready to prove Theorem~\ref{thm:curv-f}.


\thmCurvF*

\begin{proof}
The discrete recurrence relation from Lemma~\ref{lem:curv_recur} is:
\begin{equation}
(g^* - (1-c)(1-\gamma) g_i) \Delta f_{i+1} \geq (f^* - f_i)(1-\gamma) \Delta g_{i+1}
\end{equation}
Dividing by $f^* g^*$ and rearranging terms gives the normalized recurrence:
\begin{align}
\frac{\Delta f_{i+1}}{f^*} &\geq \frac{f^* - f_i}{f^*} \frac{(1-\gamma) \Delta g_{i+1}}{g^* - (1-c)(1-\gamma) g_i}   \nonumber \\
\implies  F_{i+1} - F_i &\geq (1-F_i) \frac{(1-\gamma) \Delta G_{i+1}}{1 - (1-c)(1-\gamma) G_i} \nonumber \\
\implies  F_{i+1}  &\geq F_i + (1-F_i) \frac{(1-\gamma) \Delta G_{i+1}}{1 - (1-c)(1-\gamma) G_i}
\label{eq:norm_recurrence}
\end{align}
Here, if $\frac{(1-\gamma) \Delta G_{i+1}}{1 - (1-c)(1-\gamma) G_i} \geq 1$, we will get $F_{i+1} \geq F_i + (1-F_i) = 1$, i.e., $f_{i+1} \geq f^*$. Therefore, for a lower bound on $F_{i+1}$, we only need to consider the case when
\[
\frac{(1-\gamma) \Delta G_{i+1}}{1 - (1-c)(1-\gamma) G_i} < 1
\]

We rewrite Inequality~\ref{eq:norm_recurrence} as
\begin{align*}
(1 - F_{i+1}) &\geq (1-F_i) \left( 1 - \frac{(1-\gamma) \Delta G_{i+1}}{1 - (1-c)(1-\gamma) G_i} \right) \\
&= (1-F_i) \left( 1 - \frac{\frac{1}{1-c}\Delta G_{i+1}}{\frac{1}{(1-\gamma)(1-c)} -  G_i} \right)
\end{align*}

We apply Lemma~\ref{lem:interm_recurrence_step} with $p = \frac{1}{1-c} \leq q = \frac{1}{(1-c)(1-\gamma)}$ to get
\begin{align*}
(1-F_i) \left( \frac{q - G_{i+1}}{q - G_{i}} \right)^{p}\,.
\end{align*}
Recall that, due to the definition of our supermodular curvature, this relation is valid only up to $k = i+1$ when $G_{i+1}$ exceeds 1 for the first time\footnote{Here we are using $k$ as the index when we exceed $g(S^*)$ the first time rather than exceeding $\theta$}.
Telescoping from $i=0$ to $i+1 = k$, and using $F_i = 0, G_0 = 0$, we get

\begin{align}\label{eqn:approx-curv-f}
    1 - F_{k} &\leq \left( \frac{q - G_{k}}{q} \right)^{p} 
    \implies F_k &\geq 1 - \left(1 - (1-c)(1-\gamma)G_k \right)^\frac{1}{1-c}.
\end{align}
If $g_k \leq \theta$, then $G_k \geq 1$ was achieved before exceeding the constraint. In that case, due to the monotonicity of $f$, we already have
\[
F_k \geq 1 - \left(1 - (1-c)(1-\gamma) \right)^\frac{1}{1-c}\,.
\]
If $g_k > \theta$, then choose $\beta = \frac{g_k}{\theta} \geq \frac{g_k}{g^*} = G_k$, and so
\begin{equation}
    F_k \geq 1 - \left(1 - (1-c)(1-\gamma)\beta \right)^\frac{1}{1-c}\,.
\end{equation}
\end{proof}

\subsection{Continuing the Overflow}
Now, we analyze what happens if we continue to pick elements even after constraint violation. We first prove a lemma that will help us transform one step of a recurrence that we will derive for $i \geq k$.

\begin{lemma}\label{lem:over_recurrence_step}
Let $a_{i+1} > a_i$ such that
$p' \frac{a_{i+1} - a_i}{q' + a_i} < 1$ for some $p', q' > 0$. Then, 
:
\[
1 - p' \frac{a_{i+1} - a_i}{q' + a_i} \leq \left( \frac{q' + a_{i+1}}{q' + a_i} \right)^{p'}
\]
\end{lemma}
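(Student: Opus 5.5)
We mirror the proof of Lemma~\ref{lem:interm_recurrence_step}, except that the base of the power now exceeds $1$ and the exponent $p'$ is an arbitrary positive real. The plan is to rewrite the right-hand side as $(1+x)^{p'}$ for a nonnegative $x$ and then invoke a Bernoulli-type bound valid for every $p'>0$.

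First we need the denominators positive. The hypothesis is stated only for $p',q'>0$, so we assume, as in its intended use where $a_i$ is a normalized cost $G_i\ge 0$, that $q'+a_i>0$. We then set
\[
x=\frac{q'+a_{i+1}}{q'+a_i}-1=\frac{a_{i+1}-a_i}{q'+a_i}.
\]
Since $a_{i+1}>a_i$, we have $x>0$, and the left-hand side of the claim is exactly $1-p'x$. The statement therefore reduces to the scalar inequality
\[
1-p'x\le (1+x)^{p'}\qquad\text{for } x\ge 0,\ p'>0.
\]

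To prove this, we note that $(1+x)^{p'}>0$ for every real $p'>0$. We also have $1-p'x\le 1$ because $p'x\ge 0$. Finally, $(1+x)^{p'}\ge 1$ because $1+x\ge 1$ and $p'>0$. Chaining these gives $1-p'x\le 1\le (1+x)^{p'}$, and the inequality holds trivially.

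This shows the statement as written is weak: the extra hypothesis $p'x<1$ is not even needed. The version the later telescoping presumably uses, and which the proof idea of Theorem~\ref{thm:general} describes, is $1-p'x\le(1+x)^{-p'}$. We would also prove that form, since it is the one that produces a factor $\bigl(\tfrac{q'+a_K}{q'+a_k}\bigr)^{-p'}$ as in the theorem's bound.

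The proof of the $(1+x)^{-p'}$ form is the one step needing an actual argument. We consider
\[
h(x)=(1+x)^{-p'}-1+p'x .
\]
Then $h(0)=0$ and
\[
h'(x)=p'\bigl(1-(1+x)^{-p'-1}\bigr)\ge 0 \quad\text{for } x\ge 0,
\]
so $h\ge 0$ on $[0,\infty)$. Equivalently, this is convexity of $t\mapsto(1+t)^{-p'}$ combined with its tangent line at $0$.

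The hypothesis $p'x<1$ only ensures that the left-hand side is positive, which lets the one-step bounds be multiplied across steps when telescoping. The only real obstacle is bookkeeping: being careful about which sign of the exponent the statement intends, and ensuring $q'+a_i>0$.
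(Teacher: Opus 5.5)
Your proof is correct. For the inequality exactly as printed (exponent $+p'$), your chain $1-p'x\le 1\le (1+x)^{p'}$ is a different and more elementary route than the paper's, and you are right that it needs neither the hypothesis $p'x<1$ nor any Bernoulli-type estimate. The paper uses the same substitution $x=\frac{a_{i+1}-a_i}{q'+a_i}$. It then rewrites the ratio as $(1+x)^{-p'}$ and applies the generalized Bernoulli inequality $(1+x)^r\ge 1+rx$ with $r=-p'<0$. So what it actually proves is $1-p'x\le (1+x)^{-p'}$. This confirms your reading that the exponent in the statement is a sign slip, and that the $-p'$ form is what the telescoping in Theorem~\ref{thm:general} uses. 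Your second argument, via $h(x)=(1+x)^{-p'}-1+p'x$ with $h(0)=0$ and $h'(x)\ge 0$, is exactly the paper's step, with Bernoulli proved by a derivative check instead of cited. Since $(1+x)^{-p'}\le 1\le (1+x)^{p'}$, it also recovers the printed version. Your explicit assumption $q'+a_i>0$ is tacitly needed by the paper as well, which only invokes $q'>0$ to conclude $x>0$; it holds in the application because $a_i=G_i\ge 0$. One small correction to your closing remark: positivity of $1-p'x$ is not what makes the telescoping work. After the lemma is applied, the factors $(1+x)^{-p'}$ are automatically positive, and passing from $(1-F_i)(1-p'x)$ to $(1-F_i)(1+x)^{-p'}$ only needs $1-F_i\ge 0$. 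In the paper, the hypothesis $p'x<1$ just records the case split in the proof of Theorem~\ref{thm:general}, whose complementary case gives $F_{i+1}\ge 1$ directly.
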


\begin{proof}
Let $x = \frac{a_{i+1} - a_i}{q' + a_i}$. Since $a_{i+1} > a_i$ and $q' > 0$, we have $x > 0$. We can rewrite the right-hand side of the inequality as:
\begin{align*}
\left( \frac{q' + a_{i+1}}{q' + a_i} \right)^{-p'} &= \left( \frac{q' + a_i + (a_{i+1} - a_i)}{q' + a_i} \right)^{-p'} = \left( 1 + \frac{a_{i+1} - a_i}{q' + a_i} \right)^{-p'} \\
&= (1 + x)^{-p'}
\end{align*}
We use the generalized Bernoulli Inequality, which states that for $x > -1$ and $r \leq 0$ (or $r \geq 1$):
\[
(1 + x)^r \geq 1 + rx
\]
Applying this with $r = -p'$ (where $-p' < 0$ since $p' > 0$):
\[
(1 + x)^{-p'} \geq 1 + (-p')x = 1 - p'x
\]
Substituting $x$ back into the inequality:
\[
\left( \frac{q' + a_{i+1}}{q' + a_i} \right)^{-p'} \geq 1 - p' \frac{a_{i+1} - a_i}{q' + a_i}
\]
\end{proof}

\thmBoundBeyond*

\begin{proof}
    Once we exceed the constraint, at step $i \geq k$, $g_i > g^*$. Then, Equation~\ref{eqn:g_marginal_bound} can be modified as
\begin{align}
    g(S^*|S_i) &= g(S_i|S^*) + g^* - g_i \nonumber \\
    &\leq \frac{g_i}{1-d} + g^* - g_i = \frac{d}{1-d}g_i + g^*\,.
\end{align}
Using this as in proof of Theorem~\ref{thm:curv-f} along with $\bar{g}_i \geq 0$, we get an updated recurrence relation for $i \geq k$:
\begin{equation}
    \Delta f_{i+1} ((1-\gamma)g^* + \gamma g_i) \geq (f^* - f_i)(1 - \gamma)\Delta g_{i+1}\,,
\end{equation}
which leads to 
\begin{equation}\label{eqn:overflow_interm}
    F_{i+1} \geq F_i + (1- F_i) \frac{\frac{1-\gamma}{\gamma}\Delta G_{i+1}}{\frac{1-\gamma}{\gamma} + G_i}\,.
\end{equation}
Again, if we have $\frac{\frac{1-\gamma}{\gamma}\Delta G_{i+1}}{\frac{1-\gamma}{\gamma} + G_i} \geq 1$, we already have $F_{i+1} \geq 1$. So we only need to consider the case where 
$
\frac{\frac{1-\gamma}{\gamma}\Delta G_{i+1}}{\frac{1-\gamma}{\gamma} + G_i} < 1\,.
$
We can rewrite Inequality~\ref{eqn:overflow_interm} as
\begin{equation}
    1 - F_{i+1} \leq (1-F_i)\left(1 - \frac{\frac{1-\gamma}{\gamma}\Delta G_{i+1}}{\frac{1-\gamma}{\gamma} + G_i}\right)\,.
\end{equation}
Applying Lemma~\ref{lem:over_recurrence_step} with $p' = q' = \frac{1-\gamma}{\gamma} > 0$, we get

\begin{equation}
    1 - F_{i+1} \leq (1-F_i)\left(\frac{\frac{1-\gamma}{\gamma} + G_{i+1}}{\frac{1-\gamma}{\gamma} + G_i}\right)^{-\frac{1-\gamma}{\gamma}}\,.
\end{equation}
Telescoping from $i = k$ to $i+1 = K$, for any $K > k$,
\begin{align}
    1 - F_K &\leq (1-F_k) \left(\frac{\frac{1-\gamma}{\gamma} + G_{K}}{\frac{1-\gamma}{\gamma} + G_k}\right)^{-\frac{1-\gamma}{\gamma}} \nonumber \\
    &\leq  \left(1-(1-c)(1-\gamma)G_k\right)^{\frac{1}{1-c}}\left(\frac{\frac{1-\gamma}{\gamma} + G_{K}}{\frac{1-\gamma}{\gamma} + G_k}\right)^{-\frac{1-\gamma}{\gamma}}
    \label{eqn:most_general}
\end{align}

Note that if we pick $k$ as the first time $g(S_k)$ exceeds $g(S^*)$, $G_k > 1$ and $G_k \leq \frac{2-d}{1-d}$. Also, $G_k \leq \frac{1}{(1-c)(1-\gamma)}$, otherwise, in Inequality~\ref{eqn:approx-curv-f}, we will have a negative value on the RHS greater than the LHS, which is only possible if $f_k > f^*$. In this range, the RHS of Inequality~\ref{eqn:most_general} is a decreasing function of $G_k$. Therefore, using $G_k > 1$, we get

\begin{equation}
    1 - F_K \leq  \left(1-(1-c)(1-\gamma) \right)^{\frac{1}{1-c}}\left(\frac{\frac{1-\gamma}{\gamma} + G_{K}}{\frac{1-\gamma}{\gamma} + 1}\right)^{-\frac{1-\gamma}{\gamma}}\,.
\end{equation}

Finally, let $\beta^+ = g_K/\theta$. Then, $G_K = g_K/g^* \geq g_K/\theta$. And, the RHS is a decreasing function of $G_K$, so
\begin{equation}
    1 - F_K \leq  \left(1-(1-c)(1-\gamma) \right)^{\frac{1}{1-c}}\left(\frac{\frac{1-\gamma}{\gamma} + \beta^+}{\frac{1-\gamma}{\gamma} + 1}\right)^{-\frac{1-\gamma}{\gamma}}\,.
\end{equation}
\end{proof}

%% file: refs.bib
@article{SHI2025113,
title = {Submodular + Supermodular function maximization with knapsack constraint},
journal = {Discrete Applied Mathematics},
volume = {377},
pages = {113-133},
year = {2025},
issn = {0166-218X},
doi = {https://doi.org/10.1016/j.dam.2025.06.062},
url = {https://www.sciencedirect.com/science/article/pii/S0166218X25003786},
author = {Majun Shi and Zishen Yang and Wei Wang}
}

@inbook{ModularityCurvatureSviridenko,
author = {Maxim Sviridenko and Jan Vondrák and Justin Ward},
title = {Optimal approximation for submodular and supermodular optimization with bounded curvature},
booktitle = {Proceedings of the 2015 Annual ACM-SIAM Symposium on Discrete Algorithms (SODA)},
chapter = {},
pages = {1134-1148},
doi = {10.1137/1.9781611973730.76},
URL = {https://epubs.siam.org/doi/abs/10.1137/1.9781611973730.76},
eprint = {https://epubs.siam.org/doi/pdf/10.1137/1.9781611973730.76}
}

@article{feldman2025bicriteria,
  title={Bicriteria submodular maximization},
  author={Feldman, Moran and Kuhnle, Alan},
  journal={arXiv preprint arXiv:2507.10248},
  year={2025}
}

@inproceedings{du2023improving,
  title={Improving factuality and reasoning in language models through multiagent debate},
  author={Du, Yilun and Li, Shuang and Torralba, Antonio and Tenenbaum, Joshua B and Mordatch, Igor},
  booktitle={Forty-first International Conference on Machine Learning},
  year={2023}
}

@inproceedings{bai2018greed,
  title={Greed is still good: maximizing monotone submodular+ supermodular (BP) functions},
  author={Bai, Wenruo and Bilmes, Jeff},
  booktitle={International Conference on Machine Learning},
  pages={304--313},
  year={2018},
  organization={PMLR}
}

@article{chen2023bicriteria,
  title={Bicriteria approximation algorithms for the submodular cover problem},
  author={Chen, Wenjing and Crawford, Victoria},
  journal={Advances in Neural Information Processing Systems},
  volume={36},
  pages={72705--72716},
  year={2023}
}

@inproceedings{padmanabhan2023maximizing,
  title={Maximizing submodular functions under submodular constraints},
  author={Padmanabhan, Madhavan R and Zhu, Yanhui and Basu, Samik and Pavan, Aduri},
  booktitle={Uncertainty in Artificial Intelligence},
  pages={1618--1627},
  year={2023},
  organization={PMLR}
}

@article{nemhauser1978analysis,
  title={An analysis of approximations for maximizing submodular set functions—I},
  author={Nemhauser, George L and Wolsey, Laurence A and Fisher, Marshall L},
  journal={Mathematical programming},
  volume={14},
  number={1},
  pages={265--294},
  year={1978},
  publisher={Springer}
}

@inproceedings{kempe2003maximizing,
  title={Maximizing the spread of influence through a social network},
  author={Kempe, David and Kleinberg, Jon and Tardos, {\'E}va},
  booktitle={Proceedings of the ninth ACM SIGKDD international conference on Knowledge discovery and data mining},
  pages={137--146},
  year={2003}
}

@article{yang2021learning,
  title={Learning interpretable decision rule sets: A submodular optimization approach},
  author={Yang, Fan and He, Kai and Yang, Linxiao and Du, Hongxia and Yang, Jingbang and Yang, Bo and Sun, Liang},
  journal={Advances in Neural Information Processing Systems},
  volume={34},
  pages={27890--27902},
  year={2021}
}

@article{chen2024less,
  title={Less is more: Fewer interpretable region via submodular subset selection},
  author={Chen, Ruoyu and Zhang, Hua and Liang, Siyuan and Li, Jingzhi and Cao, Xiaochun},
  journal={arXiv preprint arXiv:2402.09164},
  year={2024}
}

@inproceedings{kothawade2022prism,
  title={Prism: A rich class of parameterized submodular information measures for guided data subset selection},
  author={Kothawade, Suraj and Kaushal, Vishal and Ramakrishnan, Ganesh and Bilmes, Jeff and Iyer, Rishabh},
  booktitle={Proceedings of the AAAI Conference on Artificial Intelligence},
  volume={36},
  number={9},
  pages={10238--10246},
  year={2022}
}

@article{zhao2025dynaact,
  title={DynaAct: Large Language Model Reasoning with Dynamic Action Spaces},
  author={Zhao, Xueliang and Wu, Wei and Guan, Jian and Li, Qintong and Kong, Lingpeng},
  journal={arXiv preprint arXiv:2511.08043},
  year={2025}
}

@article{islam2021evaluation,
  title={Evaluation of the United States COVID-19 vaccine allocation strategy},
  author={Islam, Md Rafiul and Oraby, Tamer and McCombs, Audrey and Chowdhury, Mohammad Mihrab and Al-Mamun, Mohammad and Tyshenko, Michael G and Kadelka, Claus},
  journal={PloS one},
  volume={16},
  number={11},
  pages={e0259700},
  year={2021},
  publisher={Public Library of Science San Francisco, CA USA}
}

@article{chandrashekar2014survey,
  title={A survey on feature selection methods},
  author={Chandrashekar, Girish and Sahin, Ferat},
  journal={Computers \& electrical engineering},
  volume={40},
  number={1},
  pages={16--28},
  year={2014},
  publisher={Elsevier}
}

@inproceedings{chekuri2022densest,
  title={Densest subgraph: Supermodularity, iterative peeling, and flow},
  author={Chekuri, Chandra and Quanrud, Kent and Torres, Manuel R},
  booktitle={Proceedings of the 2022 Annual ACM-SIAM Symposium on Discrete Algorithms (SODA)},
  pages={1531--1555},
  year={2022},
  organization={SIAM}
}

@article{haastad1999clique,
  title={Clique is hard to approximate within n 1- $\varepsilon$},
  author={H{\aa}stad, Johan},
  year={1999}
}
